\newtheorem{theorem}{Theorem}
\newtheorem{definition}{Definition}
\newtheorem{lemma}{Lemma}
\newtheorem{corollary}[theorem]{Corollary}
\theoremstyle{definition}
\newtheorem{alg}{Algorithm}
\theoremstyle{remark}
\newcommand{\ket}[1]{|#1\rangle}
\newcommand{\tth}[0]{\textsuperscript{th}}
\DeclareMathAlphabet{\matheu}{U}{eus}{m}{n}
\newcommand{\sop}[1]{{\mathcal #1}}
\newcommand{\I}{{\mathbb I}}
\newcommand{\braket}[2]{\langle{#1}|{#2}\rangle}
\newcommand{\ketbra}[2]{|{#1}\rangle\!\langle{#2}|}
\newcommand{\no}{\nonumber\\}
\begin{document}

%\title{Oracles with Costs}
\title{{\huge{\bf{Oracles with Costs}}}}
%+Title

\author[1,2]{{\Large{Shelby Kimmel}}}
\author[2]{{\Large{Cedric Yen-Yu Lin}}}
\author[2]{{\Large{Han-Hsuan Lin}}}
\affil[1]{Joint Center for Quantum Information and Computer Science,
University of Maryland}
\affil[2]{Center for Theoretical Physics, Massachusetts Institute of Technology}
 
\date{}

\maketitle
%-Title
\begin{abstract}
While powerful tools have been developed to analyze quantum query
complexity, there are still many natural problems that do not fit
neatly into the black box model of oracles. We create a new model that
allows multiple oracles with differing costs. This model captures more
of the difficulty of certain natural problems. We test this model on a
simple problem, Search with Two Oracles, for which we create a quantum
algorithm that we prove is asymptotically optimal. We further give
some evidence, using a geometric picture of Grover's algorithm, that
our algorithm is exactly optimal.
\end{abstract}

\thispagestyle{fancy}
\rhead{MIT-CTP/4637}
\renewcommand{\headrulewidth}{0pt}
\renewcommand{\footrulewidth}{0pt}
 %test

%%%%%%%%%%%%%%%%%%%%%%%%%%%%%%%%%%%%%%%%%%%%%%%%%%%%%%%%%%%%%%%%%%%%%%%%%%%%%%%%%%%

\section{Introduction}

The standard oracle model is a powerful paradigm
for understanding quantum computers. Tools such
as the adversary semidefinite program \cite{HLS07,LMRSS11}, learning graphs
 \cite{B12,BR14}, 
and the polynomial method \cite{BBC+01} allow us to accurately characterize the 
quantum query complexity \cite{A00,BBBV97} of many problems of interest.

However, the oracle model does not capture the full power
or challenges of quantum computing. For example,
 problems such as $k$-SAT do not fit easily
into the oracle model. Additionally, 
while the query complexity of the hidden
subgroup problem is known to be polynomial in the size of the problem \cite{EHK04}, for some
non-abelian groups there is no efficient algorithm.

In this paper, we describe a variation of the oracle model. We
have access to two oracles, rather than a single oracle\footnote{The model can easily be extended
to more than two oracles, but for simplicity, we limit ourselves to two.}, but 
one oracle is more expensive to use. In the standard oracle model, the figure
of merit is the query complexity, which is the minimum number of queries needed to an oracle
to evaluate a function. In our model, the figure of merit is the {\it{cost complexity}},
which is the minimum cost needed to evaluate a function using multiple oracles with different
costs.

To motivate this model, we consider the following fact: in some search problems we want to find an
element in a set that satisfies a property that is expensive to test. However, often another less expensive test is
available that can narrow down the search range but is not conclusive. We give three examples of problems where such
less expensive, less conclusive tests are natural. In each
example, Test 1 is more expensive to run but is conclusive, while Test 2 is cheaper to run but allows some non-solutions to pass.

\begin{itemize}
\item  In the problem of $k$-SAT on $n$ bits, we would like to find an assignment $x\in\{0,1\}^n$ such that all clauses are satisfied. 
Consider an algorithm for $k$-SAT that runs two types of tests
on a possible assignment $x$:
\begin{enumerate}
\item Check whether all clauses are satisfied.
\item Check whether some subset of the clauses are satisfied.
\end{enumerate} 
\item  Given a graph $A$ and a set of graphs $\{B_1,\cdots,B_p\}$, we would like to find a graph $B_i$ isomorphic to $A$. Consider 
an algorithm that runs two types of tests on a graph $B_i$:
\begin{enumerate}
\item Check whether $B_i$ is isomorphic to $A$ (say by brute force search).
\item Check whether the adjacency matrices of $B_i$ and $A$ have the same spectrum.
\end{enumerate} 
\item 
In the decision variant of the traveling salesman problem, given a
positively weighted $N$-graph $G$ and a positive number $b$, we would
like to find a tour of the vertices of $G$ that uses cost no more than
$b$. Given a partial tour of length $N/2$, we can run two types of
tests:
\begin{enumerate}
\item Check whether the partial tour can be completed to an $N$-vertex tour that has cost at most $b$, by using brute force search.
\item Check whether the sum of the weights of the $N/2$ edges traversed in the partial tour is bigger than $b$.
% Check whether the current list of edges uses cost no more than $b$.  Given a list of $N/2$ edges $\{e_1,\cdots,e_{N/2}\}$
\end{enumerate}
\end{itemize}

In all three examples, 
the two tests can be implemented as unitaries $\sop O_1, \sop O_2$
that act as
$\sop O_i\ket{x}\ket{y}=\ket{x}\ket{y\oplus f_i(x)}.$
Here $f_i(x)=1$ if assignment $x$ passes Test $i$ and $f_i(x)=0$
otherwise. These two unitaries will play the role of oracles with different
costs.

%To motivate this model, we consider the problem of $k$-SAT on $n$ bits.
%We would like to find an assignment $x\in\{0,1\}^n$ such that all clauses are satisfied. 
%Consider an algorithm for $k$-SAT that runs two types of tests
%on a possible assignment $x$. The two tests
%are:
%\begin{enumerate}
%\item Check whether all clauses are satisfied.
%\item Check whether some subset of the clauses are satisfied.
%\end{enumerate} 
%These two tests can be implemented as unitaries $\sop O_1, \sop O_2$
%that act as
%$\sop O_i\ket{x}\ket{y}=\ket{x}\ket{y\oplus f_i(x)}.$
%Here $f_i(x)=1$ if assignment $x$ passes Test $i$ and $f_i(x)=0$
%otherwise. Because Test 1 involves checking all of the clauses,
%rather than some subset of the clauses, $\sop O_1$ will potentially take
%a longer time to apply than $\sop O_2.$
%
%This situation can easily be recast into an oracle problem, which we
%call Search with Two Oracles (STO). In STO, one is given two
%oracles, $\sop O_1$ and $\sop O_2$. $\sop O_1$ identifies the marked
%item, if it exists, but is costly to use. $\sop O_2$ is
%inexpensive to use, but provides less information: it identifies some subset
%of items that will include the marked item, if it exists. An algorithm
%for STO that finds the marked item while incurring the smallest cost
%can be applied to the problem of $k$-SAT, potentially
%improving on naive approaches that rely only on $\sop O_1$, although
%we will see that the speed-up will be polynomial at best.

None of the problems listed above are typically thought of as oracle problems, because 
in each problem, there is more information than can easily be incorporated
into a single oracle. However, with multiple oracles, the information
 can be distributed among different oracles. Using
different costs for different oracles allows us to include
information about the time required to access information. We see that
cost complexity can capture certain aspects of a problem that can
 not be easily accounted for in the standard oracle model;
 we hope this model will provide new insight into problems previously thought
beyond the tools of query algorithms. We note that we do not expect
these techniques to allow us to solve NP-complete problems in polynomial time. 
Rather, our goal is to potentially improve upon existing exponential time algorithms,
and create connections between standard oracle problems and problems that seem far
from typical oracle problems.

Problems such as those described above can easily be recast into an 
oracle problem, which we call Search with Two Oracles (STO).
In this work, we focus on the problem of STO. We tightly characterize the quantum
cost complexity of this problem, and give several techniques for putting lower
bounds on quantum cost complexity.  We also show that the cost
complexity of STO is the same whether or not the oracles can be
accessed using a control operation; that is, accessing the oracles in
superposition gives no added power.

We also attempt to exactly bound (rather than asymptotically bound)
the cost complexity of STO. Usually, one is not particularly
interested in proving exact optimality, but we have several reasons
for wanting to explore this problem.  Few quantum algorithms are known to be exactly
optimal; Grover's
algorithm and parity are two examples~\cite{DH09,BBC+01}. STO is a very simple extension of a standard
search problem, so it seems like a good candidate problem for
obtaining another exact lower bound. Proving that our algorithm is exactly optimal would provide evidence
that amplitude amplification is exactly optimal in the case of no additional 
structure (i.e. when we treat the base algorithm as a black box). Additionally, while we can obtain
asymptotically tight bounds for the problem of STO, for a simple
extension of STO to $\log N$ oracles (where $N$ is the size of the 
search space), these techniques fail.  However,
if we could obtain tighter bounds for STO, we should be able to get a
better characterization of the cost complexity for these more
complex problems.

Finally, we compare the quantum cost complexity of STO to the classical
cost complexity. We show a polynomial reduction in cost for the
quantum version. Moreover, we show that the optimal quantum and
classical algorithms behave qualitatively differently, highlighting
the power of quantum algorithms.

In Section \ref{sec:oraclemodel}, we describe cost complexity and
define STO. In Section \ref{sec:algo}, we describe
optimal quantum algorithms for STO, and in Section \ref{sec:lowerbound}, 
we put lower bounds on the cost complexity of STO. Finally,
we look at the classical cost complexity
of STO in Section \ref{sec:classical}.

%%%%%%%%%%%%%%%%%%%%%%%%%%%%%%%%%%%%%%%%%%%%%%%%%%%%%%%%%%%%%%%%%%%%%%%%%%%%%%%%%%%

%%%%%%%%%%%%%%%%%%%%%%%%%%%%%%%%%%%%%%%%%%%%%%%%%%%%%%%%%%%%%%%%%%%%%%%%%%%%%%%%%%%
%%%%%%%%%%%%%%%%%%%%%%%%%%%%%%%%%%%%%%%%%%%%%%%%%%%%%%%%%%%%%%%%%%%%%%%%%%%%%%%%%%%
%%%%%%%%%%%%%%%%%%%%%%%%%%%%%%%%%%%%%%%%%%%%%%%%%%%%%%%%%%%%%%%%%%%%%%%%%%%%%%%%%%%
%%%%%%%%%%%%%%%%%%%%%%%%%%%%%%%%%%%%%%%%%%%%%%%%%%%%%%%%%%%%%%%%%%%%%%%%%%%%%%%%%%%
\section{Cost Complexity, STO, and Relation to Previous Work}\label{sec:oraclemodel}

Cost complexity is very closely related to query complexity. For background
on query complexity, see \cite{A00,BBBV97}.

We first define cost complexity. In the following, we use the notation
$[N]\equiv\{1,\dots,N\}.$ 
%Let $F:D\rightarrow\{0,1\}$ be a function,
%and let  $(f_1,f_2)\in D$ be a set of two functions
%$f_1,f_2:[N]\rightarrow\{0,1\}$.
Given the input  $(f_1,f_2)\in D$, which is a pair of functions
$f_1,f_2:[N]\rightarrow\{0,1\}$, we want to calculate $F$ where
$F:D\rightarrow\{0,1\}$.
 Let $f_1$ be associated with cost $c_1$
and $f_2$ be associated with cost $c_2.$ Depending on the type of algorithm
(e.g. classical, quantum), these two functions
are accessed in different ways. 

In the classical setting, consider a  randomized classical algorithm
$\sop A_c$ for $F$ that makes $q_1$ queries to $f_1$, and $q_2$
queries to $f_2$. Then the cost of this algorithm is
\begin{align}
\textrm{Cost}(\sop A_c)=q_1c_1+q_2c_2.
\end{align}
Let $\mathscr{A}_{c,\epsilon}$ be the set of randomized classical
algorithms that solve $F$ with success probability at least
$1-\epsilon$ on all inputs in $D.$ Then the {\it{classical randomized cost complexity
(RCC)}} of $F$ is
\begin{align}
RCC_\epsilon(F)=\min_{\sop A_{c}\in\mathscr{A}_{c,\epsilon}}\textrm{Cost}(\sop A_c).
\end{align}

In the quantum setting, let $\sop O_1$ and $\sop O_2$ be unitaries
acting on the Hilbert space $\mathbb{C}^N$ with standard basis states
$\ket{i}$ for $i\in[N]$ as $\sop O_j\ket{i}=(-1)^{f_j(i)}\ket{i}$ for
$j\in{1,2}.$ Consider a quantum algorithm $\sop A_q$ that at each
time step, can apply $\sop O_1$ or $\sop O_2$ or some other unitary that is
independent of $f_1$ and $f_2$, and which
makes $q_1$
queries to $\sop O_1$ and $q_2$ queries to $\sop O_2.$
% where at each time step either $\sop O^1$ or $\sop O^2$ can be applied, but these
%unitaries can not be applied using a quantum control operation.
Then the cost of the algorithm $\sop A_q$ is
\begin{align}
\textrm{Cost}(\sop A_q)=q_1c_1+q_2c_2.
\end{align}
Let $\mathscr{A}_{q,\epsilon}$ be the set of quantum algorithms
that solve $F$ with success probability at least $1-\epsilon$ on all inputs in $D$. Then the 
{\it{quantum cost complexity (QCC)}} of $F$ is
\begin{align}
QCC_\epsilon(F)=\min_{\sop A_{q}\in\mathscr{A}_{q,\epsilon}}\textrm{Cost}(\sop A_q).
\end{align}

Finally, we consider quantum algorithms that can access oracles
in superposition. Let $\sop O_1$ and $\sop O_2$ be as above, and let
$\sop O_0=\I,$  the $N\times N$ identity matrix. We now consider a
quantum algorithm that has access to a controlled operation $C\sop O$ that acts
on the  the Hilbert space   $\mathbb C^3\otimes\mathbb C^{N}\otimes \mathbb{C}^V$ ($\mathbb{C}^V$ 
is a workspace register) with
standard basis states $\ket{b}\ket{i}\ket{v}$ for $i\in[N]$, $v\in[V]$, and $b\in\{0,1,2\}$ as
$C\sop O\ket{b,i}=\ket{b}\sop O_b\ket{i}\ket{v}.$ Suppose the encoded functions are $f_1$ and $f_2$. Then if an algorithm
$\sop A_{qs}$ applies $C\sop O$ a total of $T$ times over the course
of the algorithm to states
\begin{align}
\ket{\eta^t_{f_1,f_2}}=\sum_{b=0}^2 \sum_{i=1}^N \sum_{v=1}^V \alpha_{f_1,f_2}^t(b,i,v)\ket{b,i,v}
\end{align}
for $t\in[T]$, the cost of the algorithm is 
\begin{align}
\textrm{Cost}(\sop A_{qs})=\max_{f_1,f_2}\sum_{t=1}^T&\kappa(\eta^t_{f_1,f_2}) \text{ where }\nonumber\\
 \kappa(\eta^t_{f_1,f_2})=
&\begin{cases}
c_1 \text{ if } \sum_{i,v} |\alpha_{f_1,f_2}^t(1,i,v)|^2 \neq 0,\\
c_2 \text{ if } \sum_{i,v} |\alpha_{f_1,f_2}^t(1,i,v)|^2 = 0 \text{ and } \sum_{i,v} |\alpha_{f_1,f_2}^t(2,i,v)|^2 \neq 0,\\
0 \text{ if } \sum_{i,v} |\alpha_{f_1,f_2}^t(1,i,v)|^2 = 0 \text{ and } \sum_{i,v} |\alpha_{f_1,f_2}^t(2,i,v)|^2 = 0.
%0 \text{ if } \sum_i |\alpha_{3,i}^t|^2 = 1\\
%c_2 \text{ otherwise}.
\end{cases}
\end{align} 
Let $\mathscr{A}_{qs,\epsilon}$ be the set of quantum algorithms using
$C\sop O$ that solve $F$ with success probability at least
$1-\epsilon$ on all inputs in $D.$  Then the {\it{controlled quantum cost complexity (ConQCC)}} of $F$ is
\begin{align}
ConQCC_\epsilon(F)=\min_{\sop A_{qs}\in\mathscr{A}_{qs,\epsilon}}\textrm{Cost}(\sop A_{qs}).
\end{align}
The controlled quantum cost complexity is closely related to the time required in the model of variable
times introduced by Ambainis in \cite{A10}.

Note that 
\begin{align}
ConQCC_\epsilon(F)\leq QCC_\epsilon(F)\leq RCC_\epsilon(F).
\label{eq:compare_costs_def}
\end{align}

For any of the cost complexities described above, if we do not include a subscript
$\epsilon,$  then the cost is assumed to apply for the case $\epsilon=1/3$.

Now that we have defined cost complexity, we introduce
 the problem of STO as a testbed for tools and ideas that  can
hopefully be applied to more complex problems. More formally, we give
the definition of STO: 
\begin{definition}[Search with Two Oracles
(STO)]\label{defi:sto} Let $N$ and $M$ be known positive integers and
let $S\subseteq [N]$ be an unknown set. There might or
might not exist a special item $i_*$. If $i_*$ exists, then one is
promised that $i_*\in S$ and $|S|=M$. If $i_*$ doesn't exist, the size
of S is arbitrary. Let $f_*$ and $f_S$ be two functions with domain $[N]$
and range $\{0,1\}$ such that 
\begin{align}
&
f_*(i)=
\begin{cases}
1 &\text{ if }i=i_*\\
0&\text{ if }i\neq i_* \text{ or }i_*\text{ doesn't exist}.
\end{cases}
&
f_S(i)=
\begin{cases}
1 &\text{ if }i\in S\\
0&\text{ if }i\notin S.
\end{cases}
\end{align}
Then STO$(f_*,f_S)=1$ if $i_*$ exists, and $0$ otherwise. 
$c_*$ is the cost associated with $f_*$ and $c_S$ is the cost associated with $f_S,$ with $c_*\geq c_S.$
\end{definition}
\noindent $c_S$ and $c_*$ are assumed to depend on $N$ and $M,$ but our results hold for any form of that dependence,
so we leave off any explicit relationship.

Cost complexity, and STO in particular, are related to several existing oracle problems. In the
problem of STO, the function $f_S$ can be thought of as providing
extra information or advice about the function $f_*$.
There have been several studies in which access to a single oracle is
supplemented with some extra information that can come in the form of
another oracle or classical information, e.g. \cite{M11,NABT14}.
Previous works \cite{ARU14,M11} have considered multiple oracles, but
not with costs. Furthermore, the additional advice oracles considered
in these works tend to be somewhat unnatural, and are tailored to the
specific problems considered. As mentioned, $ConQCC$ is related to
the model of variable costs studied by Ambainis, in which he
considered a single oracle that has different costs for querying
different items \cite{A10}. We also note that Cerf et al. \cite{CGW00} consider
similar quantum algorithms in the context of constraint satisfaction
problems, but they do not approach the problem from an oracular perspective.

\section{Quantum Algorithms for STO}\label{sec:algo}
We now describe quantum algorithms for solving STO\footnote{For the purpose of
describing these algorithms, we assume that $i_*$ exists. A single application
of $O_*$ at the end of the algorithm can be used to check (with appropriate
probability) whether or not $i_*$ exists, at a cost of $c_*$.}. These
algorithms use the oracles $\sop O_*$ and $\sop O_S$ directly,
rather than the controlled version (i.e. $C\sop O$) of these oracles. All of our algorithms
can be viewed as examples of amplitude amplification. Recall
\begin{theorem}[Amplitude Amplification \cite{BHMT00}]\label{thm:aa}
Let $T\subset[N]$, $\alpha\in[0,1]$, and let $\sop O^T$ be an quantum oracle that marks the
elements of $T$. We define 
\begin{align}
\ket{T}=\frac{1}{\sqrt{|T|}}\sum_{i\in T}\ket{i}.
\end{align}
Given an algorithm $\sop A$ that acts on a state $\ket{\psi_0}$
and produces a state $\ket{\psi_{\sop A}}$ such that
$\left|\braket{T}{\psi_{\sop A}}\right|=p$, one can create a new
algorithm $\sop B$ that
applies $\sop O^T$, $\sop A$, and $\sop A^{-1}$ each
\begin{align}
\tau= \left\lceil\frac{\arcsin\sqrt{1-\alpha} - \arcsin p} {2\arcsin p}\right\rceil
\end{align}
times, and which acts on the initial state $\ket{\psi_0}$ and produces a state $\ket{\psi_{\sop B}}$ such that
\begin{align}
\ket{\psi_{\sop B}}=\sqrt{1-\alpha}\ket{T}+\sqrt{\alpha}\ket{T^\perp},
\end{align}
where $\braket{T}{T^\perp}=0$ and 
$\ket{T^\perp}\in \textrm{Span}\left(\ket{T},\ket{\psi_A}\right)$.
\end{theorem}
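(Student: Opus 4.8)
\begin{pr_sk}
The plan is the standard geometric analysis of amplitude amplification, specialized so that the amplified component sits along the single vector $\ket{T}$. First I would pass to the two‑dimensional subspace $\Pi=\mathrm{Span}\!\left(\ket{T},\ket{\psi_{\sop A}}\right)$ named in the statement. Set $\theta=\arcsin p\in[0,\pi/2]$; absorbing the (immaterial) phase of $\braket{T}{\psi_{\sop A}}$ into the choice of representative, we may write $\ket{\psi_{\sop A}}=\sin\theta\,\ket{T}+\cos\theta\,\ket{T^\perp}$, where $\ket{T^\perp}$ is the unit vector of $\Pi$ orthogonal to $\ket{T}$. This is precisely the form asserted in the conclusion with $\sqrt{1-\alpha}$ replaced by $p$, so the whole statement reduces to rotating, inside $\Pi$, from the angle $\theta$ up to the angle $\arcsin\sqrt{1-\alpha}$ (assuming $\sqrt{1-\alpha}\ge p$; otherwise $\tau=0$ and there is nothing to do).

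Next I would introduce the amplitude‑amplification iterate $Q=-\,\sop A\,(\I-2\ketbra{\psi_0}{\psi_0})\,\sop A^{-1}\,\sop O^T$ and check that $\Pi$ is $Q$‑invariant and that $Q$ acts on $\Pi$ as a rotation by $2\theta$ carrying $\ket{\psi_{\sop A}}$ toward $\ket{T}$. Since $\sop O^T=\I-2P_T$ with $P_T=\sum_{i\in T}\ketbra{i}{i}$, it negates $\ket{T}$ and fixes $\ket{T^\perp}$, hence restricted to $\Pi$ it is the reflection with axis $\ket{T^\perp}$ — this is the one point where one uses that the projection of $\ket{\psi_{\sop A}}$ onto the marked set is proportional to $\ket{T}$ (so that $P_T$ restricted to $\Pi$ is the rank‑one projector onto $\ket{T}$); this holds in the STO setting and can otherwise be taken as a hypothesis. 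Likewise $\sop A(\I-2\ketbra{\psi_0}{\psi_0})\sop A^{-1}=\I-2\ketbra{\psi_{\sop A}}{\psi_{\sop A}}$ is, on $\Pi$, the reflection with axis $\ket{\psi_{\sop A}}^\perp$. The composition of two planar reflections is a rotation by twice the angle between their axes, so up to the overall $-1$ and a choice of orientation of the basis one gets that $Q$ rotates $\Pi$ by $2\theta$. Therefore $Q^\tau\ket{\psi_{\sop A}}=\sin\!\big((2\tau+1)\theta\big)\ket{T}+\cos\!\big((2\tau+1)\theta\big)\ket{T^\perp}$, and the algorithm $\sop B=Q^\tau\sop A$ applied to $\ket{\psi_0}$ realizes this while invoking $\sop O^T$ and $\sop A^{-1}$ each $\tau$ times and $\sop A$ the stated number of times (up to the single state‑preparation call, a routine count of the operators in $Q$).

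Finally I would fix $\tau$. Requiring $(2\tau+1)\theta\ge\arcsin\sqrt{1-\alpha}$ and taking the least nonnegative integer solution gives exactly $\tau=\big\lceil(\arcsin\sqrt{1-\alpha}-\arcsin p)/(2\arcsin p)\big\rceil$. The one genuinely delicate point — and the step I expect to be the main obstacle — is that, because of the ceiling, $\tau$ plain iterations generically \emph{overshoot}, ending at the angle $(2\tau+1)\theta$ rather than exactly at $\arcsin\sqrt{1-\alpha}$; a naive argument only yields success probability at least $1-\alpha$. To get the exact form claimed I would invoke the tunable version of amplitude amplification from \cite{BHMT00}: replacing the two reflections by generalized phase operators $\I-(1-e^{i\varphi})\ketbra{\psi_0}{\psi_0}$ and $\I-(1-e^{i\phi})P_T$ (either uniformly or only in the last iteration) yields an iterate that still preserves $\Pi$ but whose per‑step rotation can be tuned continuously down from $2\theta$; choosing the phases so that the total rotation equals $\arcsin\sqrt{1-\alpha}-\arcsin p$ places the output exactly at $\sqrt{1-\alpha}\,\ket{T}+\sqrt{\alpha}\,\ket{T^\perp}$ without changing the query counts. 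The bookkeeping for this generalized iterate — checking it remains block‑diagonal with respect to $\Pi$ and computing its rotation angle as a function of $\varphi$ and $\phi$ — is the part that needs care; everything else is the routine two‑dimensional trigonometry above.
\end{pr_sk}
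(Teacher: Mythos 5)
Your proposal is correct and is essentially the argument of the cited source: the paper itself gives no proof of Theorem \ref{thm:aa} (it is quoted from \cite{BHMT00}), and your two-dimensional rotation analysis plus the phase-tuned (generalized-reflection) final iteration to land exactly on amplitude $\sqrt{1-\alpha}$ is exactly how that reference establishes the statement, with the iteration count $\tau$ derived the same way. You also rightly flag the one hypothesis implicit in the paper's phrasing — that the marked-subspace component of $\ket{\psi_{\sop A}}$ is proportional to $\ket{T}$ (true in all of the paper's uses, where the states lie in the span of uniform superpositions) — and the harmless off-by-one extra application of $\sop A$ for state preparation, which the paper's cost accounting ignores since $\sop A$ is the identity in its algorithms.
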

This gives us the following Corollary:
\begin{corollary}\label{corr:aacost}
Let $\sop A$ and $\tau$ be as in Theorem \ref{thm:aa}, and assume $\sop O^T$ has
cost $c_T$ while $\sop A$ and $\sop A^{-1}$ have cost $c_{\sop A}$.
Then there exists a algorithm $\sop B$ that applies $\sop O^T$, $\sop A$ and $\sop A^{-1}$
not in superposition, and produces the state $\ket{T}$ with
probability $1-\epsilon$ such that
\begin{align}
\emph{Cost}(\sop B)=\tau\left(c_T+2c_{\sop A}\right).
\end{align}
\end{corollary}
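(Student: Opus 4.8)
The plan is to apply Theorem~\ref{thm:aa} essentially verbatim, choosing its free parameter $\alpha$ to meet the desired error tolerance, and then to tally the costs of the resulting algorithm.

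First I would set $\alpha=\epsilon$ in Theorem~\ref{thm:aa}. The theorem then produces an algorithm $\sop B$ that, starting from $\ket{\psi_0}$, outputs the state $\ket{\psi_{\sop B}}=\sqrt{1-\epsilon}\,\ket{T}+\sqrt{\epsilon}\,\ket{T^\perp}$ with $\braket{T}{T^\perp}=0$. Since $\ket{T^\perp}$ is orthogonal to $\ket{T}$, measuring $\ket{\psi_{\sop B}}$ in a basis containing $\ket{T}$ (in the typical search application, just the standard basis, where the ``good'' component of $\ket{\psi_{\sop A}}$ already lies along $\ket{T}$) yields $\ket{T}$ with probability $\left|\braket{T}{\psi_{\sop B}}\right|^2=1-\epsilon$, which is the claimed success probability.

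Next I would count the cost. By Theorem~\ref{thm:aa}, $\sop B$ consists of exactly $\tau$ applications of $\sop O^T$, $\tau$ applications of $\sop A$, and $\tau$ applications of $\sop A^{-1}$, interleaved with additional unitaries — the reflections about $\ket{\psi_0}$ — that are independent of the oracle $\sop O^T$ and therefore carry zero cost. Each application of $\sop O^T$ costs $c_T$ and each application of $\sop A$ or $\sop A^{-1}$ costs $c_{\sop A}$, so by additivity of cost over the sequence of operations, $\textrm{Cost}(\sop B)=\tau c_T+\tau c_{\sop A}+\tau c_{\sop A}=\tau(c_T+2c_{\sop A})$. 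Moreover, in Theorem~\ref{thm:aa} all of $\sop O^T$, $\sop A$, and $\sop A^{-1}$ are applied directly rather than through a controlled operation, so $\sop B$ indeed uses these operations not in superposition, as required.

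Since this corollary is a direct bookkeeping consequence of Theorem~\ref{thm:aa}, I do not expect a genuine obstacle; the only point requiring a small amount of care is confirming that the query count supplied by Theorem~\ref{thm:aa} is exactly $\tau$ for each of the three costed operations and that the intervening input-independent unitaries contribute nothing to the cost.
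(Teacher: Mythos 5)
Your proposal is correct and matches the paper's (implicit) reasoning: the paper states Corollary~\ref{corr:aacost} without proof as immediate bookkeeping from Theorem~\ref{thm:aa}, identifying the success parameter with $\alpha$ (i.e.\ $\alpha=\epsilon$) and charging $c_T$ for each of the $\tau$ oracle calls and $c_{\sop A}$ for each of the $\tau$ uses of $\sop A$ and of $\sop A^{-1}$, with the oracle-independent reflections free. Nothing further is needed.
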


In the following, we describe three algorithms for STO. 
 We consider the limit that $M,N/M\rightarrow \infty$ to simplify our analysis, but 
this limit still captures the essential behavior of the algorithms. We use the following notation:
\begin{align}
\ket{N}=& \frac{1}{\sqrt N} \sum_{i=1}^N\ket{i},\no
\ket{S} =& \frac{1}{\sqrt{M}} \sum_{i\in S} \ket{i}.
\end{align}
We have a slight abuse of notation, since $\ket{N}$ could refer
either to the equal superposition state, or the $N\tth$ standard basis
state. However, whenever we write $\ket{N}$, we will always mean the
equal superposition state.

The first algorithm we consider ignores $\sop O_S$ and performs a Grover search for $i_*$ using $\sop O_*$: 

\begin{alg} (Grover's Search) \label{alg:direct} \hfill

Prepare the state 
$\ket{N}$
at cost 0. Set $\sop A$ equal to the identity. 
Then by Corollary \ref{corr:aacost}
there exists an algorithm $\sop B$ that produces the state $\ket{i_*}$ with probability $1-\epsilon$ with cost
\begin{align}
c_*\left\lceil\frac{\arcsin\sqrt{1-\epsilon} - \arcsin \frac{1}{\sqrt{N}}} {2\arcsin \frac{1}{\sqrt{N}}}\right\rceil.
\end{align}
In the limit of $N\rightarrow\infty,$ the cost becomes
\begin{align}
c_*\arcsin\sqrt{1-\epsilon}\sqrt{N}.
\end{align}
\end{alg}

However, if $\sop O_S$ comes to us cheaply, we would like to take advantage of it:
The following
algorithm first rotates $\ket{N}$ to $\ket{S}$ (using $\sop O_S$),
and then rotates $\ket{S}$ to $\ket{i_*}$  (using both $\sop O_S$ and
$\sop O_*$).

\begin{alg} \label{alg:entire} \hfill

Prepare the state $\ket{N}$ at cost 0. Set $\sop A$ equal to the identity. 
Since $|\braket{N}{S}|=\sqrt{M/N}$, by Corollary \ref{corr:aacost}
there exists an algorithm $\sop B$ that with probability $1$ produces the state $\ket{S}$ at cost
\begin{align}
c_S\left\lceil\frac{\left(\frac{\pi}{2} - \arcsin \sqrt{\frac{M}{N}}\right)}{2 \arcsin \sqrt{\frac{M}{N}}}\right\rceil.
\end{align}
Now $|\braket{i_*}{S}|=\sqrt{1/M}$, so using Corollary \ref{corr:aacost} again, there exists an algorithm $\sop C$
that with probability $1-\epsilon$ produces the state $\ket{i_*}$ at cost
\begin{align}
\left\lceil\frac{\arcsin\sqrt{1-\epsilon} - \arcsin \frac{1}{\sqrt{M}}} {2\arcsin \frac{1}{\sqrt{M}}}\right\rceil
\left(c_*+2c_S\left\lceil\frac{\left(\frac{\pi}{2} - \arcsin \sqrt{\frac{M}{N}}\right)}{2 \arcsin \sqrt{\frac{M}{N}}}\right\rceil\right).
\end{align}
 Dropping terms of size at most $O(M^{-1/2})$ or $O\left((M/N)^{1/2}\right)$ of  the zeroth 
order terms, the cost becomes
\begin{align}
\frac{\arcsin\sqrt{1-\epsilon}}{4}\left(2c_*\sqrt{M}+\pi c_S\sqrt{N}\right).
\end{align}
\end{alg}

Combining Algorithms  \ref{alg:direct} and \ref{alg:entire}, we have that
\begin{align}\label{eq:cost12}
QCC(STO)&=O\left(\min\left\{c_*\sqrt{N},c_*\sqrt{M}+c_S\sqrt{N}\right\}\right)\nonumber\\
&=O\left(\max\left\{c_*\sqrt{M},c_S\sqrt{N}\right\}\right).
\end{align}

 In Section \ref{sec:lowerbound}, we will show that this cost (Eq.
(\ref{eq:cost12})) is asymptotically optimal. This means that Algorithm
\ref{alg:entire} is always asymptotically optimal, although Algorithm \ref{alg:direct} has lower cost when $c_* \approx c_S$.
However, it turns out that  there is an algorithm that has lower cost
than either Algorithm \ref{alg:direct} or \ref{alg:entire}.   In
Section \ref{sec:lowerbound}, we give evidence that this final
algorithm, which we call the Hybrid Algorithm,   is not just
asymptotically optimal, but exactly optimal.

The two algorithms we have so far presented can be summarized as
follows: Algorithm \ref{alg:direct} directly performs Grover rotations
to rotate $\ket{N}$ to $\ket{i_*}$, while Algorithm
\ref{alg:entire} first rotates $\ket{N}$ to $\ket{S}$, then rotates
$\ket{S}$ to $\ket{i_*}$.  The final algorithm we consider, the Hybrid Algorithm, first rotates $\ket{N}$
to some superposition of $\ket{N}$ and $\ket{S}$,  and then rotates to $\ket{i_*}.$

\begin{alg}[Hybrid Algorithm] \hfill
\label{alg:hybrid}

 Prepare the state $\ket{N}$ at cost 0. Set $\sop A$ equal to the identity. 
 Since $|\braket{N}{S}|=\sqrt{M/N}$, by Theorem \ref{thm:aa} and
 Corollary \ref{corr:aacost}
there exists an algorithm $\sop B$ that produces a state $\ket{\psi_{\sop B}}$  at cost
\begin{align}
c_S\left\lceil\frac{\left(\arcsin{\sqrt{1-\alpha}} - \arcsin \sqrt{\frac{M}{N}}\right)}{2 \arcsin \sqrt{\frac{M}{N}}}\right\rceil.
\end{align} 
where
\begin{align}
\ket{\psi_{\sop B}}=\sqrt{1-\alpha}\ket{S}+\sqrt{\alpha}\ket{S^\perp}.
\end{align}
By Theorem \ref{thm:aa}, $\ket{S^\perp}$ is a linear combination of $\ket{S}$ and $\ket{N}$
but is orthogonal to $\ket{S}.$ Therefore, $\ket{S^\perp}$ is a superposition of 
all elements not in $S$, and so $\braket{i_*}{S^\perp}=0$. Thus
\begin{align}
\frac{\sqrt{1-\alpha}}{\sqrt{M}}=\braket{\psi_{\sop B}}{i_*}.
\end{align}
Applying Corollary \ref{corr:aacost} again, we can create an algorithm $\sop C$
that has cost
\begin{align}\label{eq:cost_hybrid_pre}
\left\lceil\frac{\arcsin\sqrt{1-\epsilon} - \arcsin \frac{\sqrt{1-\alpha}}{\sqrt{M}}} 
{2\arcsin\frac{\sqrt{1-\alpha}}{\sqrt{M}}}\right\rceil
\left(c_*+2c_S\left\lceil\frac{\left(\arcsin{\sqrt{1-\alpha}} - 
\arcsin \sqrt{\frac{M}{N}}\right)}{2 \arcsin \sqrt{\frac{M}{N}}}\right\rceil\right)
\end{align}
and produces the state $\ket{i_*}$ with probability $1-\epsilon$. In
Appendix \ref{app:cost}, we show there is a choice of $\alpha$
such that,  dropping terms of size at most $O(M^{-1/2})$ or $O((M/N)^{1/4})$ that of the zeroth
order terms, the cost is
\begin{align}\label{eq:fund_cost}
 \textrm{Cost}(\textrm{Hybrid})=\frac{c_S\sqrt{N}\arcsin\sqrt{1-\epsilon} }{2} \sec \left(\phi_{opt}+\sqrt{\frac{M}{N}}\right),
\end{align}
where $\phi_{opt}$ is given by
\begin{align}\label{eq:kcond_orig}
\phi_{opt}=\max
\begin{cases}
0\\
\phi:\tan\left(\phi+\sqrt{\frac{M}{N}}\right)=\phi+\frac{c_*}{c_S}\sqrt{\frac{M}{N}}.
\end{cases}
\end{align}
\end{alg}

When $c_S$ is close to $c_*$, this algorithm approximates Algorithm 1.
When $c_S$ is very small compared to $c_*$, it approximates Algorithm
2. Otherwise, it, in effect, interpolates between the two algorithms.

%%%%%%%%%%%%%%%%%%%%%%%%%%%%%%%%%%%%%%%%%%%%%%%%%%%%%%%%%%%%%%%%%%%%%%%%%%%%%%%%%%%
\section{Lower Bound on Quantum Cost Complexity of STO} \label{sec:lowerbound}

Several techniques give asymptotically tight 
lower bounds on the quantum cost complexity of STO.
We will briefly sketch two approaches for bounding the quantum cost complexity ($QCC)$,
and then discuss a bound on controlled quantum cost complexity ($ConQCC$) in detail. 
The fact that so many approaches give good lower bounds is encouraging;
this means many techniques from (or variations on) the standard query complexity 
toolbox can be applied.

 Our
lower bound on $ConQCC$(STO) is asymptotically
tight with the algorithms of Section \ref{sec:algo}, i.e. Eq. (\ref{eq:cost12}), even though
those algorithms do not use controlled oracles.
Because algorithms that use controlled versions of the oracles are more
powerful than oracles that can not access controlled versions (see Eq. (\ref{eq:compare_costs_def})), this
result proves that not only are our algorithms for STO asymptotically optimal,
but having access to a controlled version of the oracles for STO does
not give an advantage.

When discussing lower bounds on the cost of STO, we will often refer to the
SEARCH problem. We call SEARCH the problem in which one is given a function
$f_*:[N]\rightarrow\{0,1\}$ such that there is exactly zero or one
element $i_*$ such that $f_*(i_*)=1$, and one would like to determine
if there is such an element $i_*; $ in other words, SEARCH is computing OR$(f_*)$    with a promise on $f_*$.

Here are brief descriptions of two methods for lower bounding $QCC$. We describe
them in the context of STO, but they could be applied more generally.

{\textbf{Oracle Simulation:}} Suppose one only has an oracle $\sop
O_*.$ Then one could use this to simulate an oracle $\sop O_S$ by
applying $\sop O_*$, and then subsequently randomly choosing $M-1$ items
to mark. If $M\ll N$, with high probability, the chosen $M-1$ items will not include $\sop O_*$, and this simulated oracle will act
identically to a true $\sop O_S.$ Now any algorithm for STO that uses
this simulated oracle will actually only use $\sop O_*$ to find
the marked item $i_*$, and so the problem reduces to SEARCH.
 Well-known quantum lower bounds on SEARCH \cite{BBBV97} then give a lower bound
on the total number of queries to either $\sop O_*$ or the simulated $\sop O_S$, which
in turn can be used to put a lower bound on the cost. For more details
on oracle simulation, see Section \ref{sec:classical}, in which we use
oracle simulation to bound the classical cost complexity of STO.

{\textbf{Adversary Method:}} One can create an adversary matrix whose
rows and columns are indexed by pairs of oracles $(f_*,f_S)$.
This matrix can be used to create a progress function, and then one can
bound the progress that either oracle $\sop O_*$ or $\sop O_S$ can make. 
This gives lower bounds on the queries needed to $\sop O_*$ and $\sop O_S$ 
to evaluate STO, which in turn can be used to lower bound the cost of STO.
In Appendix \ref{sec:adversary}, we detail how to create this
bound for STO.

%%%%%%%%%%%%%%%%%%%%%%%%%%%%%%%%%%%%%%%%%%%%%%%%%%%%%%%%%%%%%%%%%%%%%%%%%%%%%%
%%%%% 
\subsection{Lower Bound on Controlled Quantum Cost Complexity of STO}\label{sec:super}

In this section, in order to lower bound $ConQCC$(STO), we consider a new problem in the standard query
model, which we call Expanded Search with Two Oracles (ESTO).  We show
that if we had an algorithm $\sop A$ which could use the control
oracle $C\sop O$ to solve STO with cost $c_\sop A$, then we could
create a new algorithm $\sop A'$ to solve ESTO using $O(c_\sop A)$
queries. We then use the adversary method to lower bound the query
complexity of ESTO, which in turn puts a lower bound on $ConQCC$(STO). This strategy is inspired by
Ambianis's approach for lower bounding the
variable times search problem \cite{A10}.

\begin{figure}[t!]
\centering
\includegraphics[width=13cm]{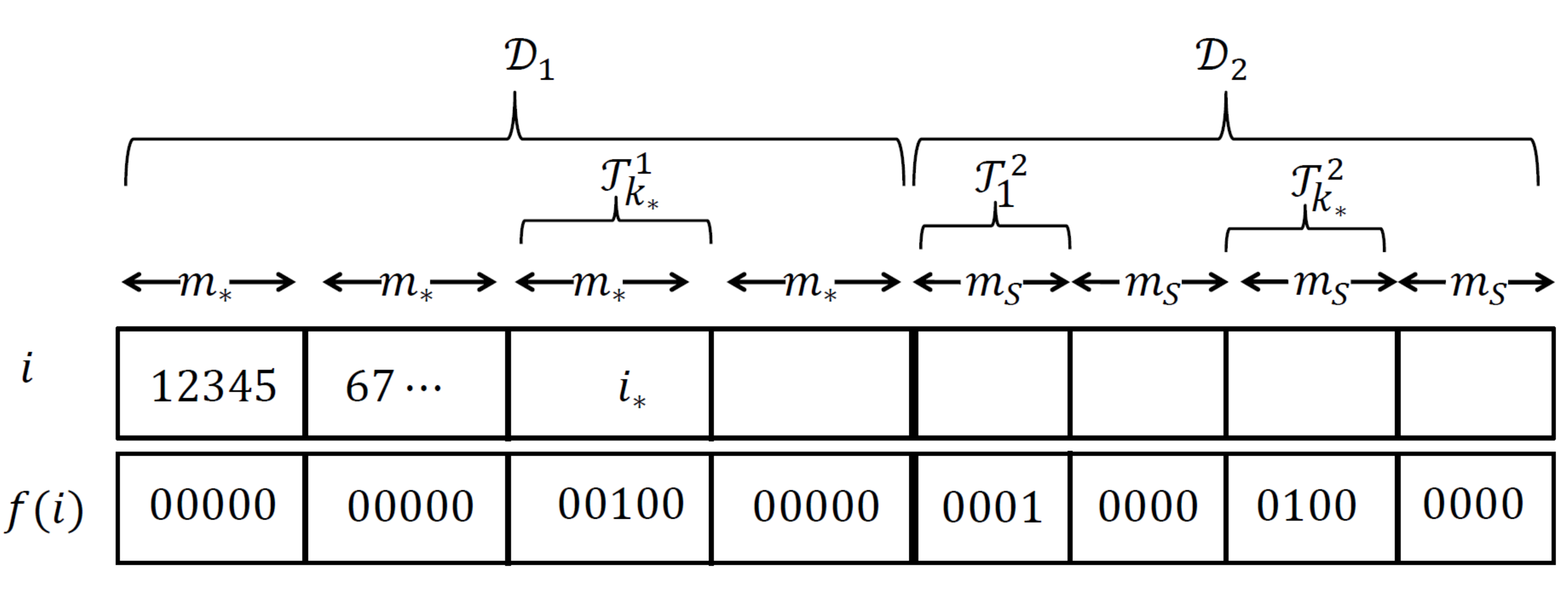}
\caption{\label{plot:ESTO}
A diagram of a function $f$ for which ESTO$(f)=1$. The domain of $f$ is divided
into two parts $\sop D_1$ and $\sop D_2$. Each of these sets are further divided
into $N$ sets of size $m_*$ and $m_S$ respectively. These sets are labeled
$\sop T^1_k$ for sets in $\sop D_1$, and $\sop T^2_k$ for sets in $\sop D_2$. We see there is
exactly one value of $i\in \sop D_1$ with value $1$, and it is in the set $\sop T_{k_*}^1.$
In the case shown in this figure, $S=\{1,k_*\}$, so both $\sop T^2_{k_*}$
and $\sop T^2_{1}$ contain exactly one marked item.}\label{fig:ESTO}
\end{figure}

We first describe the problem ESTO. We suggest referencing Figure
\ref{fig:ESTO} during the description of the problem for a graphical
interpretation. Let $N,$ $M,$ $c_*$ and $c_S$ be as in STO. Without loss
of generality, we can assume $c_*,c_S\gg1$. If they are not, we
can multiply both costs by some large factor $K.$ Then the final
cost is exactly a factor of $K$ larger than it would have been with the original
costs. (If $c_S=0$, this approach does not work, but in that case,
STO reduces to SEARCH). We define
\begin{align}
m_*&=\max\left\{i:\left\lceil \frac{\pi}{4}\sqrt{i}\right\rceil+1\leq c_*,i\in\mathbb Z\right\},\nonumber \\
m_S&=\max\left\{i:\left\lceil \frac{\pi}{4}\sqrt{i}\right\rceil+1\leq c_S,i\in\mathbb Z\right\},\nonumber %\\
%m&=N(m_*+m_S).
\end{align}

ESTO queries an unknown function $f:[N(m_S+m_*)]\rightarrow \{0,1\}$.  We consider
$\sop D_1=\{1,\dots, Nm_*\}$ to be the ``first part'' of the domain of $f$,
and $\sop D_2=\{Nm_*+1,\dots, N(m_*+m_S)\}$ to be the ``second part'' of the domain. We
further divide $\sop D_1$ ($\sop D_2$) into $N$ blocks of $m_*$ ($m_S$) elements respectively, 
where the elements 
$\sop T^1_k=\{(k-1)m_*+1,\dots,km_*\}$ constitute the $k\tth$ block of $\sop D_1$,
 and the elements $\sop T^2_k=\{Nm_*+(k-1)m_S+1,\dots,
Nm_*+km_S\}$ constitute the $k\tth$ block of $\sop D_2$.

We are promised that there is either exactly zero or one value $i_*\in
\sop D_1$  such that $f(i_*)=1.$  If there is such an $i_*$, we label
the block it is in by $k_*$, so $i_*\in T^1_{k_*}$. Furthermore, if
$i_*$ exists, there is a set $S\in[N]$ such that $|S|=M$, $k_*\in S$,
and for each $k\in S$ there is exactly one value of $i\in\sop T^2_k$
such that $f(i)=1.$  Given such a function $f$, ESTO$(f)=1$ if there
is an item $i_*\in \sop D_1$  such that $f(i_*)=1,$ and $0$ otherwise.

Given an algorithm $\sop A$ for STO that
uses the control oracle $C\sop O$ and has cost $c_\sop A,$ 
we can create an
algorithm $\sop A'$ to solve ESTO that uses $2c_\sop A$ queries.
 Let $y_j^b=1$ for $b\in\{1,2\}$ if
there is an element $i\in\sop T_j^b$ such that $f(i)=1$, and $0$
otherwise.  Then by Claim 2 in \cite{A10}, there is an algorithm $\sop
B$ that takes
$\ket{b,j}\ket{0}\ket{0}\rightarrow\ket{b,j}\ket{y_j^b}\ket{\psi_j^b}$ for some
state $\ket{\psi_j^b}$\ and uses $c_*$ queries if $b=1$ and $c_S$
queries if $b=2$. At the cost of doubling the number
of queries, we can uncompute the final register. Thus there
is an algorithm $\sop B'$ that takes
$\ket{b,j}\ket{0}\rightarrow\ket{b,j}\ket{y_j^b}$
 and uses $2c_*$ queries if $b=1$ and $2c_S$
queries if $b=2$. We also allow for $b=0$, in which case
the algorithm $\sop B'$ applies the identity.

Then we can solve ESTO using our algorithm $\sop A$ for STO. In STO
we are searching for a specific element $i^*\in[N]$ with certain
properties, in ESTO, the search is for a specific block
$k^*\in[N]$ with analogous properties.  We replace an application of the
controlled oracle $C$-$\sop O$ to the state $\ket{b,i}$ with
$b\in\{0,1,2\}$ and $i\in[N]$ with an application of the algorithm
$\sop B'$ to the state $\ket{b,i}$, (which corresponds to searching
the block $\sop T^b_i$, for $b\in\{1,2\}$ and $i\in[N]$, or doing nothing
if $b=0$). The number
of queries required by $\sop B'$ will be twice cost of the equivalent
query made by $\sop A$. Due to the specific structure of $f$, this
algorithm will solve ESTO with a number of queries equal to 
$2c_\sop A.$

Now all that is left is to put a lower bound on the number of queries
needed to solve ESTO. We use Ambainis's adversary bound:

\begin{theorem}[Basic Adversary Bound \cite{A00}]\label{thm:ambainis}
Let $F(f(1),\dots,f(N))$ be a function of $N$ $\{0,1\}$-valued variables
$f(i)$, and let $X,$ $Y$ be two sets of inputs such that $F(f)\neq F(g)$
if $f\in X$ and $g\in Y.$ Let $R\subset X\times Y$ be such that
\begin{itemize}
\item For every $f\in X$, there exist at least $\mu$ different $g\in Y$ such that
$(f,g)\in R.$
\item For every $g\in Y$, there exist at least $\mu'$ different $f\in X$ such that
$(f,g)\in R.$
\item For every $f\in X$ and $i\in[N]$, there are at most
 $l$ different $g\in Y$ such that $(f,g)\in R$ and $f(i)\neq g(i).$
\item For every $g\in Y$and $i\in[N]$, there exist at least $l'$ 
different $f\in X$ such that
$(f,g)\in R$  and $f(i)\neq g(i).$
\end{itemize}
Then, any quantum algorithm computing $F$ with error at most $\epsilon$ on
all valid inputs uses
at least
\begin{align}
\frac{1-2\sqrt{\epsilon(1-\epsilon)}}{2}\sqrt{\frac{\mu\mu'}{ll'}}
\end{align}
queries.

\end{theorem}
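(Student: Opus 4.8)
The plan is to run the standard progress-function (hybrid) argument that underlies the adversary method. Model a $T$-query quantum algorithm for $F$ as $U_T\,\sop O_f\,U_{T-1}\cdots U_1\,\sop O_f\,U_0$, where each $U_t$ is independent of the input and $\sop O_f$ is the query unitary $\sop O_f\ket{i}=(-1)^{f(i)}\ket{i}$ acting on a register holding the query index $i\in[N]$ (tensored with trivial action on the workspace). Let $\ket{\psi_f^t}$ be the algorithm's state after the $t\tth$ query on input $f$; since all inputs share the same initial state, $\ket{\psi_f^0}=\ket{\psi_g^0}$ for every pair $f,g$. As the progress measure I would take
\[
W_t=\sum_{(f,g)\in R}\bigl|\braket{\psi_f^t}{\psi_g^t}\bigr|,
\]
weighting every pair of $R$ equally; this uniform, $0$--$1$ weighting is exactly what makes it the ``basic'' bound.

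First I would pin down the two endpoints. At $t=0$ every inner product equals $1$, so $W_0=|R|$. At $t=T$, correctness forces, for each $(f,g)\in R$ (recall $F(f)\neq F(g)$), the final states $\ket{\psi_f^T}$ and $\ket{\psi_g^T}$ to be distinguished by the output measurement with error at most $\epsilon$; the fidelity--trace-distance inequalities then yield $\bigl|\braket{\psi_f^T}{\psi_g^T}\bigr|\le 2\sqrt{\epsilon(1-\epsilon)}$, hence $W_T\le 2\sqrt{\epsilon(1-\epsilon)}\,|R|$. Thus over the whole run $W$ must drop by at least $\bigl(1-2\sqrt{\epsilon(1-\epsilon)}\bigr)|R|$.

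Next I would bound the change of $W_t$ caused by a single query. Input-independent unitaries preserve all inner products, so only $\sop O_f$ moves $W_t$; writing $\Pi_i$ for the projector onto ``query index $=i$'', and using $\sop O_f^\dagger\sop O_g-\I=-2\sum_{i:f(i)\neq g(i)}\Pi_i$, one gets
\[
\Bigl|\braket{\psi_f^{t+1}}{\psi_g^{t+1}}-\braket{\psi_f^{t}}{\psi_g^{t}}\Bigr|\le 2\sum_{i:\,f(i)\neq g(i)}\bigl\|\Pi_i\ket{\psi_f^t}\bigr\|\,\bigl\|\Pi_i\ket{\psi_g^t}\bigr\|.
\]
Summing over $(f,g)\in R$, regrouping the double sum by the coordinate $i$, and applying Cauchy--Schwarz first within each coordinate block (using that, for fixed $f$ and $i$ resp.\ fixed $g$ and $i$, the number of related pairs differing at $i$ is bounded by $l$ resp.\ $l'$) and then across coordinates (using $\sum_i\|\Pi_i\ket{\psi_f^t}\|^2=1$, so the row sum over $f$ is $|X|$ and the column sum over $g$ is $|Y|$), I would obtain
\[
|W_{t+1}-W_t|\le 2\sqrt{l l'}\,\sqrt{|X|\,|Y|}\le \frac{2\sqrt{l l'}}{\sqrt{\mu\mu'}}\,|R|,
\]
where the last inequality uses $|R|\ge\mu|X|$ and $|R|\ge\mu'|Y|$, so $|R|^2\ge\mu\mu'|X||Y|$. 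Combining with the endpoint bound, $\bigl(1-2\sqrt{\epsilon(1-\epsilon)}\bigr)|R|\le T\cdot\tfrac{2\sqrt{ll'}}{\sqrt{\mu\mu'}}|R|$, which rearranges to the claimed $T\ge\tfrac{1-2\sqrt{\epsilon(1-\epsilon)}}{2}\sqrt{\mu\mu'/(ll')}$.

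The main obstacle is the per-query estimate: one must expand $\sop O_f$ in the query register, exploit that $\sop O_f$ and $\sop O_g$ differ only on the coordinates where $f\neq g$, and then carry out the two-stage Cauchy--Schwarz so that the coordinate-wise multiplicity bounds $l,l'$ enter as $\sqrt{ll'}$ while the counting bounds $\mu,\mu'$ convert the row/column norm sums $|X|,|Y|$ into a multiple of $|R|$. Getting this with the exact constant $2$ (rather than only up to constants) is the one delicate point; the endpoint analysis and the concluding algebra are routine.
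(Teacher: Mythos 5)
Your proposal is correct: it is the canonical progress-function (hybrid) proof of Ambainis's adversary theorem, and the per-query step works exactly as you sketch — a single Cauchy--Schwarz over the pairs $((f,g),i)$ with $f(i)\neq g(i)$ gives the bound $2\sqrt{l\,|X|\cdot l'\,|Y|}$, hence the stated constant. The paper does not prove this statement itself (it is quoted from \cite{A00}), so there is nothing to contrast; your argument reproduces the cited proof, and note that your phase-oracle convention matches the paper's own definition of oracle access, with the same estimate holding verbatim for bit-flip oracles.
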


For the sets $X$ and $Y$, we only consider functions $f$ where in each block $\sop
T_j^b$, there is at most $1$ marked item. We denote by
$f_{k_*,i_*,S,S'}$ a function where $i_*\in \sop D_1$ is the marked item, 
$k_*$ is the block where the $i_*$ sits
(or $i_*=k_*=0$ if there is no marked item in $\sop D_1$), $S$ is the set of blocks in $\sop D_2$
that have exactly one marked item in each block, and $S'$ is a list
of the $|S|$ items that are marked in the second part of the domain.

Let $X$ be the set of all functions  $f_{k_*,i_*,S,S'}$ with $k_*\neq
0$, $i_*\neq 0$, $|S|=M,$ and $k_*\in S$. From our definition of ESTO, these are
functions for which the algorithm should output 1. Let $Y$ be the set
of functions  $f_{0,0,T,T'}$ with $|T|=M-1.$ Then  $R$ is defined by
$(f_{k_*,i_*,S,S'},f_{0,0,T,T'})\in R$ if and only if
$T\subset S$, $T'\subset S',$ and $k_*\notin T.$  With
this definition of $R$, we have $\mu=1$ while $\mu'=(N-M+1)m_*m_S.$ 
Likewise $l=1$ while $l'=\max\{m_S,m_*\}=m_*$ since $c_*\geq c_S$.
%We see that each
%element of $X$ is related to a single element of $Y$ in $R$. However,
%each element of $Y$ is related to $(N-M+1)m_*m_S$ elements of $X$ in
%$R.$ (There are $(N-M+1)$ blocks where $k_*$ is valid, and $m_*$
%elements within the $(k_*)\tth$ block where $i_*$ is valid.  Then
%there are $m_S$ elements within the $(k_*)\tth$ block in the second
%half of the domain to choose from to place the final marked item.) If
%we consider a specific element $i_*$ in the first half of the domain,
%then a function  $f_{0,0,T,T'}\in Y$ with $f_{0,0,T,T'}(i_*)=0$ is
%paired with $m^s$ functions $f_{k_*,i_*,S,S'}\in X$ where
%$f_{k_*,i_*,S,S'}(i_*)=1$, because there are $m^s$ choices for where
%to put the marked item in the block $(k_*)\tth$ block of the second
%half of the domain. Likewise, if we consider a specific element $i$ in
%the $(k_*)\tth$ block of the second part of the domain  then a
%function  $f_{0,0,T,T'}\in Y$ with $f_{0,0,T,T'}(i_*)=0$ is paired
%with at most $m_*$ functions $f_{k_*,i_*,S,S'}\in X$ where
%$f_{k_*,i_*,S,S'}(i)=1$.  This is because there are $m_*$ choices for
%where to put the item $i_*$ that are in the block $\sop T^1_{k_*}$.
Theorem \ref{thm:ambainis} then gives that the number of queries required to solve ESTO, is at least 
\begin{align}\label{eq:ESTO1}
%Q_\epsilon(ESTO)&=\Omega\left(\sqrt{\frac{(N-M+1)m_*m_S}{\max\{m_*,m_S\}}}\right)\nonumber\\
%&=\Omega\left(\sqrt{(N-M+1)m_S}\right),
\frac{1-2\sqrt{\epsilon(1-\epsilon)}}{2}\sqrt{(N-M+1)m_S}.
\end{align}

Eq. (\ref{eq:ESTO1})  does not tell the full story; we can repeat
this procedure with the set $X$ the same as before, but now the set
$Y$ includes all functions $f_{0,0,S,S'}$ such that $|S|=M.$ Then we choose
$(f_{k_*,i_*,S,S'},f_{0,0,T,T'})\in R$ if and only if $T= S$
and $T'= S'.$ With this definition of $R$, we have $\mu=1$, while
$\mu'=Mm_*.$ Likewise $l=1$ while $l'=1$.
%We see that each element of $X$ is related to a single
%element of $Y$ in $R$. However, each element of $Y$ is related to
%$Mm_*$ elements of $X$ in $R$, because $f_{0,0,T,T'}$ is related to
%functions in $X$ that have $i_*$ in one of the $M$ blocks in $T.$ Thus
%there are $Mm_*$ places to put $i_*.$ Furthermore, because
%$(f_{k_*,i_*,S,S_i},f_{0,0,T,T'})\in R$  differ only at the position
%$i_*$, for a fixed $i_*$, each function $f_{0,0,T,T'}\in Y$ is only
%related to a single function in $f_{k_*,i_*,S,S_i}\in X$ such that
%$f_{0,0,T,T'}(i_*)\neq f_{k_*,i_*,S,S_i}(i_*).$
 Again using Theorem \ref{thm:ambainis}, we have that the number of 
 queries required to solve ESTO is at least
\begin{align}\label{eq:ESTO2}
\frac{1-2\sqrt{\epsilon(1-\epsilon)}}{2}\sqrt{Mm_*}.
\end{align}

Since $c_*,c_S\gg1$,
we have $m_*=\Omega((c_*)^2)$
and $m_S=\Omega( (c_S)^2)$, so combining Eq. (\ref{eq:ESTO1}) and Eq. (\ref{eq:ESTO2}),
and using the fact that a lower bound on the query complexity of ESTO gives 
a lower bound on the controlled quantum cost complexity of of STO, we have
\begin{align}\label{eq:super_lb}
ConQCC_\epsilon(\textrm{STO})&\ge\frac{1-2\sqrt{\epsilon(1-\epsilon)}}{4}\times
\max\left\{ \sqrt{Mm_*},\sqrt{(N-M+1)m_S}\right\}\\
&=\Omega\left(\max\left\{\sqrt{M}c_*,\sqrt{(N-M+1)}c_S\right\}\right).
\end{align}
With Eq. (\ref{eq:cost12}), this bound proves our
algorithms are asymptotically optimal. In Figure
\ref{fig:compare_algs}, we compare the bound given by the reduction to ESTO
with the Hybrid Algorithm. 
Even though the functions are asymptotically tight, the forms of these two bounds
are quite different.

\begin{figure}[t!]
\centering
\includegraphics[width=10cm]{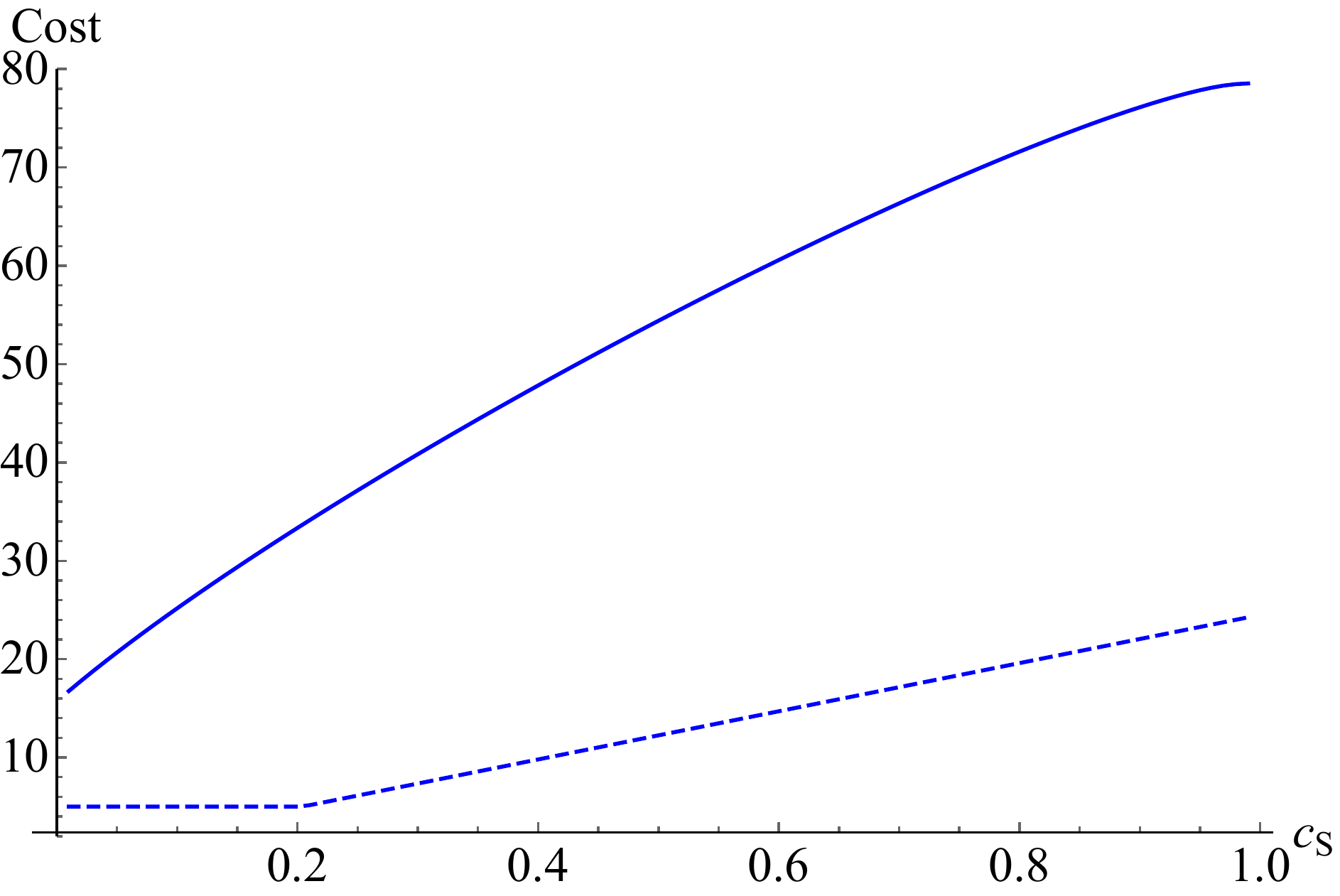}
\caption{\label{plot:compare}
The solid line is the cost of the hybrid algorithm, while the dashed line
is the lower bound on the cost given by Eq. (\ref{eq:super_lb}).
The cost is calculated with $c_*=1$, $N=10^4$, $M=400$ and
$\epsilon=0$ while
$c_S$ is varied.}\label{fig:compare_algs}
\end{figure}

%%%%%%%%%%%%%%%%%%%%%%%%%%%%%%%%%%%%%%%%%%%%%%%%%%%%%%%%%%%%%%%%%%%%%%%%%%%%%%%%%%%%%%%%%%%%%%%%%%%%%%%%%%%%%%%%%%%%%%%%%%%%%%%%%%%%%%%%%%%%%%%%%%%%%%%%%%%%
%%%%%%%%%%%%%%%%%%%%%%%%%%%%%%%%%%%%%%%%%%%%%%%%%%%%%%%%%%%%%%%%%%%%%%%%%%%%%%%%%%%%%%%%%%%%%%%%%%%%%%%%%%%%%%%%%%%%%%%%%%%%%%%%%%%%%%%%%%%%%%%%%%%%%%%%%%%%%%%%%

\subsection{Exact Lower Bound for Cost Complexity of STO}\label{sec:geo_not_appendix}

In the introduction, we mentioned several reasons for wanting
to prove exact optimality of our algorithm for STO. Aside from finding
an example besides Grover's algorithm of an exactly optimal algorithm,
proving our algorithm for STO is optimal would have several
other implications. First, the algorithms described in
Section \ref{sec:algo} are all based on amplitude amplification,
so if we can prove these approaches are optimal, that would give evidence
that amplitude amplification is an exactly optimal algorithm for
certain types of unstructured search problems.

Second, if we consider an extension of STO to many oracles,
we can no longer prove asymptotic optimality of our amplitude 
amplification algorithm. Note that in amplitude
amplification, (see Theorem \ref{thm:aa}), 
the inner algorithm ($\sop A$) is applied two times for each application
of the oracle that identifies the target state (if $\sop A=\sop A^{-1}$). This factor
of two is not accounted for in our lower bound of Section \ref{sec:super}. While
this factor of two can be swept under the rug using asymptotic notation, if we 
consider a problem with $k$ nested oracles, and try to apply a similar strategy as for STO
 and use nested amplitude amplification, the innermost algorithm
 will accumulate an extra factor of $2^k$ in the number of times it must be applied.
Using a strategy similar to Section \ref{sec:super} to lower bound
this problem will not catch that factor of $2^k$, for the same reason
the factor of $2$ is not characterized by the oracle simulation and adversary method. In the case of $k=\log N$ nested oracles,
our bounds will no longer be asymptotically tight. Thus, if we can find
an exact bound in the case of STO,
we might be able to extend it to get asymptotically tight bounds
for the case of nested oracles, providing evidence that multiple
nestings of amplitude amplification are optimal for certain
problems.

We have found that proving an exactly tight lower bound for STO is
 a challenge, and in fact we can only prove the hybrid algorithm
is optimal in a limited setting. The difficulty in proving
optimality even in this limited case provides insight into the 
difficulty of the more general case.

The restricted setting we investigate is to only consider \emph{Grover-like}
algorithms. 
\begin{definition}
A \emph{Grover-like} algorithm with oracles $\{\sop O_1,\dots,\sop O_l\}$ that
act on an $N$-dimensional Hilbert space must:
\begin{itemize}
 \item Use only an $N$-dimensional Hilbert space as its workspace,
 \item Initialize in the equal superposition state $\ket{N}=\frac{1}{\sqrt{N}}\sum_{i=1}^{N}\ket{i}$,
 \item Use only the unitaries $\{\sop O_1,\dots,\sop O_l\}$ and $G=\I-2\ketbra{N}{N}$
, and
\item End with a measurement on the standard basis. 
 \end{itemize}
\end{definition} 

If we consider \emph{Grover-like} algorithms for SEARCH, the state of the system is restricted to a
2-dimensional subspace spanned by $\ket{N}$ and $\ket{i_*}.$ Since
$G^2=\sop O_*^2=\I,$ the only possible algorithm is alternating $G$ and
$\sop O_*$, and one can easily track the progress of the state through
the two dimensional space towards $\ket{i_*}$, thus trivially proving that in
this setting, Grover's algorithm is exactly optimal.

We will see in the proof of Theorem \ref{thm:geo} that for STO, the
picture becomes much more complicated. In fact, even in the restricted
setting of \emph{Grover-like} algorithms, we need an additional
assumption to prove optimality. In particular, we show

\begin{restatable}[Exact Lower Bound]{theorem}{geo}  \label{thm:geo}
The cost of every Grover-like algorithm for
STO that succeeds with probability at least $1-\epsilon$ for a constant $\epsilon$ is at least
\begin{align} 
\frac{c_S\sqrt{N}\arcsin\sqrt{1-\epsilon}}{2} \sec \left(\phi_{opt}+\sqrt{M/N}\right),
\end{align} 
where $\phi_{opt}$ satisfies 
\begin{align}\label{eq:kcond}
\phi_{opt}=\max
\begin{cases}
0,\\
\phi:\tan\left(\phi+\sqrt{\frac{M}{N}}\right)=\phi+\frac{c_*}{c_S}\sqrt{\frac{M}{N}}.
\end{cases}
\end{align}
We also require  the conditions $M,N/M\rightarrow \infty$ and $C \rightarrow 0$, where
\begin{align} \label{eq:stupid_condition}
C\equiv\frac{c_S\sqrt{N}}{c_*\sqrt{\epsilon}2M\cos\left(\phi_{opt}+
\sqrt{M/N}\right)}.
\end{align}

\end{restatable}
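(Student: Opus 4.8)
The plan is to track the trajectory of a Grover-like algorithm through the low-dimensional subspace it actually explores, and to convert the geometry of that trajectory into a cost lower bound via an argument analogous to the standard hybrid/geometric lower bound for Grover's algorithm. The first step is to identify the relevant subspace. A Grover-like algorithm for STO starts at $\ket{N}$ and applies only $G = \I - 2\ketbra{N}{N}$, $\sop O_*$, and $\sop O_S$. Since $\sop O_* = \I - 2\ketbra{i_*}{i_*}$ and $\sop O_S$ acts as $-1$ on $\ket{S}$-type vectors and $+1$ elsewhere, the reachable states live in the three-dimensional space spanned by $\ket{N}$, $\ket{S}$, and $\ket{i_*}$ (more precisely, by $\ket{i_*}$, the normalized projection of $\ket{S}$ orthogonal to $\ket{i_*}$, and the normalized projection of $\ket{N}$ orthogonal to those). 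I would set up angular coordinates on this space: an angle $\theta$ measuring rotation ``toward $\ket{S}$'' and an angle $\phi$ measuring rotation ``toward $\ket{i_*}$'', with the small quantity $\sqrt{M/N}$ being the initial overlap angle $\braket{N}{S}$ and $1/\sqrt{M}$ the overlap $\braket{S}{i_*}$.

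Next I would bound the progress each oracle call can make. The key point, exactly as in the geometric proof that Grover is optimal, is that $\sop O_S$ is a reflection whose fixed-point set contains $\ket{i_*}$, so applying $\sop O_S$ cannot change the overlap with $\ket{i_*}$ at all; it can only move the state ``horizontally'' (change $\theta$). Meanwhile $\sop O_*$ is a reflection about the hyperplane orthogonal to $\ket{i_*}$, and combined with $G$ it can increase the $\ket{i_*}$-overlap, but by an amount controlled by how far the current state has been rotated toward the $\ket{S}$ direction — concretely, one Grover step gains angle roughly $2\phi$-worth of amplitude only if the state's ``radius'' in the $(\ket{N},\ket{S})$ plane has been set up appropriately, and the rate at which $\sop O_*$-driven rotation accrues toward $\ket{i_*}$ is $\sec(\cdot)$-amplified by the prior $\theta$-rotation. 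I would make this precise by writing a recursion: after $q_S$ cheap calls the state sits at angle $\phi$ from $\ket{N}$ in the $(\ket{N},\ket{S})$-plane, costing $\approx q_S c_S$ with $q_S \approx \frac{\phi}{2\sqrt{M/N}}$ (this is just amplitude amplification applied in reverse, Corollary \ref{corr:aacost}); then each subsequent expensive Grover step advances the $\ket{i_*}$-overlap angle by at most $\frac{2\sqrt{1/M}}{\cos(\phi+\sqrt{M/N})}$ (the $\sec$ factor being the amplified overlap of the current planar state with $\ket{i_*}$), at cost $c_*$ per step plus the $2c_S$ re-preparation cost implicit in amplitude amplification. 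To reach success probability $1-\epsilon$ the total $\ket{i_*}$-angle must reach $\arcsin\sqrt{1-\epsilon}$, giving a lower bound on total cost that is a function of the ``turning angle'' $\phi$, and minimizing over $\phi$ yields precisely the stationarity condition $\tan(\phi+\sqrt{M/N}) = \phi + \frac{c_*}{c_S}\sqrt{M/N}$ and the $\sec(\phi_{opt}+\sqrt{M/N})$ form of the bound.

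The main obstacle I expect is handling algorithms that do \emph{not} cleanly separate into a ``horizontal phase then vertical phase'' — an adversary algorithm is free to interleave $\sop O_S$ and $\sop O_*$ calls in any order, oscillate $\theta$ up and down, and generally refuse to follow the optimal trajectory. To handle this I would argue that (i) any $\sop O_S$ call made \emph{after} the state has nonzero $\ket{i_*}$-component is at best neutral and at worst wasteful, since $\sop O_S$ fixes $\ket{i_*}$ and so only the maximal value of the planar ``radius toward $\ket{S}$'' ever attained matters; and (ii) since $\sec$ is monotincreasing on $[0,\pi/2)$, the best an algorithm can do is first drive $\theta$ as far as it intends to (call the resulting angle $\phi$, paying for it), then spend all expensive calls at that maximal amplification — any other schedule does strictly worse by convexity. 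This ``rearrangement to the canonical order'' argument is where the restriction to Grover-like algorithms and the auxiliary condition $C \to 0$ of Eq.~(\ref{eq:stupid_condition}) get used: the condition $C\to 0$ controls the lower-order error terms (the discrepancy between the discrete $\lceil\cdot\rceil$ step counts and the continuous angle accounting, and the $O(M^{-1/2})$, $O((M/N)^{1/4})$ corrections from Algorithm \ref{alg:hybrid}) so that the discretized optimum actually matches the continuous one. I would finish by checking that the resulting bound coincides term-for-term with Eq.~(\ref{eq:fund_cost}), so the Hybrid Algorithm is exactly optimal in this setting.
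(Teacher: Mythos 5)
Your geometric setup is the same as the paper's: the state stays in the three-dimensional span of $\ket{i_*}$, $\ket{S}$, $\ket{N}$, each of $G$, $\sop O_S$, $\sop O_*$ acts as a reflection there, and optimizing the resulting cost over the intermediate angle $\phi$ produces the stationarity condition $\tan(\phi+\sqrt{M/N})=\phi+\frac{c_*}{c_S}\sqrt{M/N}$. The gap is in the step you label ``rearrangement to the canonical order,'' which is precisely the hard part of the theorem. Your claim (i) --- that any $\sop O_S$ call made after the state has a nonzero $\ket{i_*}$ component is wasteful --- is false: in the hybrid algorithm itself, every outer amplitude-amplification round re-applies the $\sop O_S$-built rotation (the $2c_{\sop A}$ term in Corollary \ref{corr:aacost}), because $G\sop O_*$ sends $\phi\mapsto -\phi$ (up to $O(\theta_0/\cos\theta)$) and the algorithm must pay in $\sop O_S$ calls to restore $\phi$ to a favorable value before the next $\sop O_*$. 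Your claim (ii), that monotonicity of $\sec$ forces a ``rotate once, then amplify'' schedule, does not engage with this repeated restoration cost, and it also fails locally: applying $\sop O_*$ at \emph{negative} $\phi$ can give a strictly larger progress-per-cost than applying it at $\phi_{opt}$, so no greedy or pointwise-convexity argument rules it out; the paper needs a dedicated lemma with a case analysis of three interleaving patterns to exclude negative-$\phi$ applications globally. The device that makes the paper's proof close --- and that is absent from your sketch --- is a single scalar progress function
\begin{align}
H(\theta,\phi)=\theta-k\min_{\ell\in\mathbb{Z}}\left|\phi+2\ell\pi-\tfrac{\pi}{2}\right|,
\qquad k=\theta_0\cos(\phi_{opt}+\phi_0),
\end{align}
chosen so that $G$ leaves $H$ invariant and \emph{both} oracles increase $H$ by at most $2\phi_0\theta_0\cos(\phi_{opt}+\phi_0)/c_S$ per unit cost, uniformly over where they occur in the circuit (for $\phi\ge 0$). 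The second term of $H$ is exactly what charges the algorithm for occupying a favorable $\phi$, so that the restoration $\sop O_S$ calls are automatically debited against the gain from $\sop O_*$; once that is in place the bound is just total required progress divided by the maximal rate, and no rearrangement of the circuit is needed.

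Two smaller inaccuracies: $\sop O_S$ does not fix $\ket{i_*}$ (it negates it, since $i_*\in S$); what is true is that it preserves the magnitude of the $\ket{i_*}$ component, which is what your argument needs. And the condition $C\to 0$ is not about the ceiling-versus-continuous discretization error; in the paper it controls the $O(k\theta_0/\sqrt{\epsilon})$ change in the progress function incurred when replacing $\sop O_*$ by $G\sop O_* G$ to enforce $|\phi|\le\pi/2$, ensuring that error is negligible relative to the per-query progress $O(\phi_0 k c_*/c_S)$.
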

\noindent Theorem \ref{thm:geo} matches the cost of our hybrid
algorithm, Eq. (\ref{eq:fund_cost}).

The proof of Theorem \ref{thm:geo} can be found in Appendix \ref{app:geo}; here
we provide a very brief sketch. Just as a {\it{Grover-like}} algorithm for standard
search can be thought of as acting on a two dimensional subspace of the full
$N$-dimensional Hilbert space, a Grover-like
algorithm for STO can be thought of as acting on a three-dimensional subspace.
We create a progress function as a position of the state in this subspace
such that $G$ has no affect on the progress function,
while $\sop O_*$ and $\sop O_S$ can cause the progress function to increase or decrease.
We then show that the increase in the progress function due to one of the oracles,
divided by the cost of that oracle, is bounded. In other words, for a given cost,
we can only increase the progress function by a certain amount, no matter which oracle
is used. We finally take the total change in the progress function necessary to achieve
success, and divide by the change in progress per cost to put a lower bound the cost.

%%%%%%%%%%%%%%%%%%%%%%%%%%%%%%%%%%%%%%%%%%%%%%%%%%%%%%%%%%%%%%%%%%%%%%%%%%%%%%%%%%%
\section{Classical Cost Complexity of STO}\label{sec:classical}

In this section, we give bounds on the classical randomized cost
complexity ($RCC$) of STO. We will examine both the exact and bounded
error cost complexity. For the exact cost complexity, we see that
there are two classical algorithms that resemble Algorithm
\ref{alg:direct} and Algorithm \ref{alg:entire}, but whereas in the
quantum case, it is possible to do better with the Hybrid Algorithm,
we prove that there is no classical counterpart to the Hybrid
Algorithm. In the case of exact and bounded error cost complexity, we see a polynomial
increase in cost compared to the quantum case.

In the case of exact classical cost complexity, we have:
\begin{lemma}\label{lemm:classical_zero_error}
The exact (0-error) classical 
cost complexity of STO is
\begin{align}
RCC_0(STO)=\min\{Nc_*,(N-1)c_S+Mc_*\}.
\end{align}
\end{lemma}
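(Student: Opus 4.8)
The plan is to prove matching upper and lower bounds. For the upper bound, I would exhibit two deterministic algorithms and take the cheaper one. The first algorithm ignores $f_S$ entirely and queries $f_*$ on all of $[N]$: this costs $Nc_*$ and certainly determines whether $i_*$ exists. The second algorithm queries $f_S$ on $N-1$ points; if it ever sees a $1$ it has located a candidate, and in any case after $N-1$ queries to $f_S$ the set $S$ is determined on all but one point, so the at most $M$ points consistent with ``$i \in S$'' are known (if fewer than $M$ ones have appeared, the last unqueried point must also be in $S$ when $i_*$ exists, or $S$ could be anything when it doesn't — either way the candidate set for $i_*$ has size at most $M$). Then query $f_*$ on those at most $M$ candidates, for a total cost $(N-1)c_S + Mc_*$. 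Taking the minimum gives $RCC_0(\mathrm{STO}) \le \min\{Nc_*,(N-1)c_S+Mc_*\}$. I should be slightly careful about the exact bookkeeping of the ``$N-1$ vs. $N$'' and ``$M$ vs. $M$'' counts in the second algorithm — this is the one genuinely fiddly spot in the upper bound, and it is where the promise structure (if $i_*$ exists then $|S| = M$ and $i_* \in S$) is used.

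For the lower bound I would use an adversary/adaptive-argument: fix any deterministic $0$-error algorithm and let an adversary answer queries so as to keep two inputs — one with $\mathrm{STO} = 1$ and one with $\mathrm{STO} = 0$ — consistent with the transcript for as long as possible. The adversary answers every $f_S$ query with $1$ and every $f_*$ query with $0$, until forced to commit. Under this strategy, after the algorithm has made $q_S$ queries to $f_S$ and $q_*$ queries to $f_*$, there are still potential locations for $i_*$: any point that has been answered $1$ by $f_S$ and has not yet been queried by $f_*$ is a valid candidate for being $i_*$ in a ``yes'' instance, while the ``no'' instance (no $i_*$ at all, $S$ = everything queried) is also consistent. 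The algorithm can only safely output $0$ once there is no such surviving candidate: that requires that for every point, either $f_S$ has been queried there (and hence all of $[N]$ has been probed by $f_S$, costing $Nc_S$ — but wait, one can skip the very last $f_S$ query if the running count of $1$'s already rules out $i_*$, which cannot happen in the adversary's all-ones answers) or $f_*$ has been queried there returning $0$. So to terminate, every $i \in [N]$ must be ``covered'' either by an $f_*$-query or an $f_S$-query, with a correction for the single point that can be inferred. Formalizing this into the clean bound $\min\{Nc_*,(N-1)c_S+Mc_*\}$ is the heart of the argument: one shows that any covering of $[N]$ that leaves the ``yes''/``no'' ambiguity unresolved must either query $f_*$ nearly everywhere, or query $f_S$ nearly everywhere and then be forced into $\ge M$ further $f_*$ queries on the points $f_S$ left as candidates (the adversary chooses $S$ to be exactly the set of $M$ worst such points, including one the algorithm has not yet eliminated with $f_*$).

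Finally, since the cost of a randomized algorithm is a convex combination over its internal randomness of deterministic costs, and each deterministic $0$-error strategy obeys the lower bound, the randomized $0$-error cost obeys it too; combined with the upper bound this gives equality. The main obstacle I expect is the lower-bound bookkeeping — precisely pinning down why the adversary can always force the ``$+Mc_*$'' term (and not merely $+(M-1)c_*$ or similar) and why exactly $N-1$ rather than $N$ appears in the $f_S$-heavy branch; getting the off-by-one constants exactly right, rather than up to $O(1)$, is where the real care is needed, whereas the overall structure of both bounds is straightforward.
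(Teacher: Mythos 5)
Your upper bound is essentially the paper's: the two algorithms (query $f_*$ on all of $[N]$, or query $f_S$ on $N-1$ items and then $f_*$ on the at most $M$ surviving candidates) are exactly the paper's Algorithms for this lemma, and the $N-1$ versus $M$ bookkeeping is resolved there just as you sketch, using the promise that $|S|=M$ and $i_*\in S$ whenever $i_*$ exists. Your closing remark that a randomized $0$-error algorithm is a mixture of deterministic $0$-error strategies is also fine.

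The lower bound, however, has a genuine gap, and the specific adversary you propose does not work. If the adversary answers \emph{every} $f_S$ query with $1$, then after $M+1$ such answers the transcript is inconsistent with every yes-instance, because the promise forces $|S|=M$ exactly when $i_*$ exists; the algorithm can therefore output $0$ with certainty at cost about $(M+1)c_S$, far below $\min\{Nc_*,(N-1)c_S+Mc_*\}$. The adversary must ration its $1$-answers. In the paper's argument only the first $M-1$ items queried with $f_S$ are declared to lie in $S$ (later $f_S$ queries return $0$, which eliminates those items since $i_*\in S$), the one item left unqueried (if the algorithm queries all but one) is declared to lie in $S$, and the marked item is always the \emph{last} item to be completely queried, where ``completely queried'' means queried with $f_*$, or queried with $f_S$ and found outside $S$. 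This keeps a yes-instance and a no-instance simultaneously consistent until essentially every item is completely queried, and a short case analysis over the number $q$ of $f_S$ queries gives worst-case cost $qc_S+[(N-q)+(M-1)]c_*$ for $M-1\le q\le N-1$, $qc_S+Nc_*$ for $q\le M-1$, and $Nc_S+Mc_*$ for $q=N$, minimized at $q=0$ or $q=N-1$. This counting is exactly the piece you flag as ``the heart of the argument'' and leave unformalized, so as written the lower bound, and hence the claimed equality, is not established.
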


\begin{proof}
We consider an adversarial oracle that knows
in advance the queries the algorithm will make. 
%This adversarial oracle
%then makes it difficult for the algorithm to find the marked item.

Recall that for $i\in[N]$, $f_S$ identifies whether $i\in S$ and $f_*$
identifies whether $i=i_*$. We say an item has been {\it
completely queried} if it has been queried with $f_S$, and is found
to not be an element of $S$, or if it has been queried with $f_*$.
Then the adversarial oracle acts in the following way:
\begin{itemize}
\item The first $M-1$ items that the algorithm
queries using oracle $f_S$ are all elements of $S$.
\item If all elements except one have been queried (but not necessarily completely
queried) using either function $f_*$ or $f_S$,
the final element to be queried will be an element of $S$ (even if this
element is not queried using $f_S$).
\item The last element to be completely queried is the marked item,
if it exists.
\end{itemize}

Any algorithm acting against this adversarial oracle
that makes $q$ queries using $f_S$, has worst-case cost at least
\begin{align}
&Nc_S+Mc_* &\textrm{ if $q=N$},\no
&qc_S+[(N-q)+(M-1)]c_* &\textrm{ if $N-1\geq q\geq M-1$},\no
&qc_S+Nc_* &\textrm{ if $M-1\geq q\geq 0$}.
\end{align}
These expressions are minimized at $q=N-1$ or $q=0$, and we obtain 
\begin{align}
RCC_0(STO)\geq\min\{Nc_*,(N-1)c_S+Mc_*\}.
\end{align}

For the upper bound, consider the following two algorithms.
\begin{alg}\label{alg:classical_all}
Query all items using $f_*$. This algorithm
will find the marked item if it exists with certainty, and has cost $Nc_*$.
\end{alg}
\begin{alg}\label{alg:classical_useM}
 Query all but the last item using $f_S$. Then:
 \begin{itemize}
 \item If $M$ items of $S$ have been found, query $f_*$ on these $M$ items.
 \item If $M-1$ items of $S$ have been found, query $f_*$ on these $M-1$ items, and also the last item (the item that was not queried using $f_S$).
 \item Otherwise $|S| \neq M$ and therefore no marked item exists.
 \end{itemize}
This algorithm will find the marked item if it exists with certainty, and has cost $(N-1)c_S+Mc_*$.
\end{alg}
Thus we have
\begin{align}
RCC_0(STO)\leq\min\{Nc_*,(N-1)c_S+Mc_*\}.
\end{align}
\end{proof}

Algorithm \ref{alg:direct} can be thought of as the quantum version of
Algorithm \ref{alg:classical_all}, while Algorithm \ref{alg:entire}
can be thought of as the quantum version of Algorithm \ref{alg:classical_useM}.
In the $0$-error classical case, these two approaches tell the whole
story. However, in the quantum case, you can do better with the Hybrid Algorithm.
The Hybrid Algorithm works by doing something very quantum, which is to partially
search for the elements of $S.$ In the classical case, this doesn't work. Once
you've found an element of $S$, you've found it; there is no way to partially
find an element of $S$.

With Lemma \ref{lemm:classical_zero_error}, we've proven that in the $0$-error
case, we can obtain a polynomial reduction in cost by using a quantum algorithm
for $STO.$ Next, we show this polynomial reduction holds even in the case of 
bounded error algorithms. We do this by reducing $STO$ to the problem 
of SEARCH. Recall that for SEARCH, we have:
\begin{lemma}\label{lem:classical_search}
Any randomized classical algorithm that solves SEARCH with bounded probability 
must query $f_*$ at least 
$\Omega(N)$ times.
\end{lemma}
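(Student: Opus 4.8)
The plan is to use the standard hybrid/adversary argument for unstructured search, adapted to the randomized (rather than quantum) setting. Recall that SEARCH asks us to decide, given $f_*:[N]\to\{0,1\}$ with the promise that $f_*$ has at most one $1$, whether such a marked item exists. I would prove the stronger statement that any randomized algorithm succeeding with probability at least $2/3$ on all valid inputs must make $\Omega(N)$ queries to $f_*$.

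First I would set up the adversary argument. Consider the all-zero input $f_0$ (for which $\mathrm{SEARCH}(f_0)=0$) and, for each $j\in[N]$, the input $f_j$ which is $1$ only at position $j$ (for which $\mathrm{SEARCH}(f_j)=1$). Fix any deterministic algorithm $\mathcal D$ that makes at most $q$ queries, and run it on $f_0$; let $Q\subseteq[N]$ be the (at most $q$) positions it queries before halting. For every $j\notin Q$, the execution of $\mathcal D$ on $f_j$ is \emph{identical} to its execution on $f_0$ — every query returns $0$ in both runs — so $\mathcal D$ produces the same output on $f_j$ as on $f_0$. Since $\mathrm{SEARCH}(f_0)=0\ne 1=\mathrm{SEARCH}(f_j)$, the algorithm errs on $f_j$ whenever it answers correctly on $f_0$, and errs on $f_0$ otherwise. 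In particular, for each fixed deterministic $\mathcal D$ with $q<N$ queries, the number of inputs in $\{f_0,f_1,\dots,f_N\}$ on which $\mathcal D$ is wrong is at least $N-q$ (it is wrong on all $f_j$, $j\notin Q$, or else wrong on $f_0$).

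Next I would average over the algorithm's randomness and over a hard input distribution. Let the input be drawn from the distribution that picks $f_0$ with probability $1/2$ and picks a uniformly random $f_j$, $j\in[N]$, with probability $1/2$. A randomized algorithm is a distribution over deterministic ones; by the averaging just established, any deterministic $\mathcal D$ making $q$ queries errs on this distribution with probability at least $\tfrac12\cdot\tfrac{N-q}{N}$ (accounting for the case analysis: if $\mathcal D$ is correct on $f_0$ it fails on the $\ge N-q$ bad $f_j$'s, contributing error $\ge \tfrac12\cdot\tfrac{N-q}{N}$; if it is incorrect on $f_0$ it contributes error $\ge\tfrac12$). Hence the same lower bound holds for the randomized algorithm's average error. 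For this to be at most $1/3$ we need $\tfrac12\cdot\tfrac{N-q}{N}\le \tfrac13$, i.e.\ $q\ge N/3$. Therefore any bounded-error randomized algorithm for SEARCH makes $\Omega(N)$ queries to $f_*$, as claimed.

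The argument is essentially routine — the only mild subtlety, and the step I would be most careful about, is the case analysis over whether $\mathcal D$ is correct on the all-zero input: one must make sure the hard distribution puts enough weight on $f_0$ so that an algorithm cannot simply "always output $1$" and be wrong only on $f_0$. Putting weight $1/2$ on $f_0$ handles this cleanly. (This lemma is of course classical folklore; in the paper one could alternatively just cite a standard reference such as \cite{BBBV97} for the analogous quantum bound, whose classical analogue is even easier.)
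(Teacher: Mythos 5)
Your proof is correct. Note, however, that the paper does not actually prove Lemma \ref{lem:classical_search}: it introduces it with ``Recall that for SEARCH, we have'' and treats the $\Omega(N)$ classical query bound for unstructured search as a standard, folklore fact (the classical counterpart of the quantum $\Omega(\sqrt N)$ bound of \cite{BBBV97}). So there is no paper proof to match; what you have supplied is a complete, self-contained argument that the paper omits. Your route --- fix a deterministic algorithm, observe that its run on the all-zero input $f_0$ is identical to its run on any $f_j$ with $j$ outside the at most $q$ queried positions, then apply the easy direction of Yao's principle against the distribution putting weight $1/2$ on $f_0$ and $1/2$ uniformly on the singleton inputs --- is the standard way to make this rigorous, and your case analysis correctly handles the ``always output $1$'' strategy by giving $f_0$ constant weight, yielding $q \ge N/3$ for error at most $1/3$. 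The only minor caveat is that your argument, as stated, bounds the worst-case number of queries of the randomized algorithm (you condition on ``makes at most $q$ queries''); if one wanted a bound on the expected number of queries, a short additional Markov-inequality step would be needed, but for the way the lemma is used in the paper the worst-case statement suffices.
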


Now we can prove the reduction of STO to standard search:
\begin{lemma}
Any randomized classical algorithm that solves STO with bounded probability
of error must use as least $\Omega(N)$ queries to either $f_*$ or $f_S$,
as long as $M/N\leq 1/9.$
\end{lemma}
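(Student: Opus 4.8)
The plan is to argue by a reduction from SEARCH, using the oracle-simulation idea sketched in Section~\ref{sec:lowerbound}. Suppose $\sop A$ is a randomized classical algorithm that solves STO with bounded error using $o(N)$ queries to $f_*$ and $f_S$ combined. I would show how to convert $\sop A$ into a randomized classical algorithm for SEARCH on a domain of size $N$ (or a constant fraction thereof) that also uses $o(N)$ queries to $f_*$, contradicting Lemma~\ref{lem:classical_search}.

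The reduction goes as follows. Given an instance of SEARCH, i.e.\ a function $f_*:[N]\to\{0,1\}$ with at most one marked element $i_*$, we run $\sop A$, answering its $f_*$-queries directly by querying the SEARCH oracle, and simulating its $f_S$-queries internally: pick a uniformly random subset $S\subseteq[N]$ of size $M$ \emph{in advance} (before $\sop A$ starts), and answer the query ``is $i\in S$?'' consistently with that fixed choice. The key point is that with this simulated oracle, $\sop A$ is being run on a \emph{valid} STO input whenever $i_*\in S$: the promise ``$i_*\in S$ and $|S|=M$'' is exactly satisfied, and STO$(f_*,f_S)=1$ iff $i_*$ exists. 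So conditioned on the event $i_*\in S$ (or $i_*$ not existing), the output of $\sop A$ correctly decides whether $i_*$ exists. Since $S$ is chosen uniformly at random independently of the SEARCH instance, $\Pr[i_* \in S] = M/N$ when $i_*$ exists, which is too small to be useful directly --- this is the obstacle, and it's why the hypothesis $M/N \le 1/9$ appears.

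To get around the small success probability I would boost it. The natural move: rather than a single random $S$ of size $M$, \emph{pad} the SEARCH instance. Embed the given $[N]$ SEARCH instance into a larger domain, or equivalently restrict attention to a random subset: choose a random size-$M$ set $S$, and \emph{also} restrict the search to $S$ --- that is, run $\sop A$ on the sub-instance where we pretend the domain relevant to $f_*$ is contained in $S$. Concretely, one partitions $[N]$ and uses the freedom in how $f_S$ is simulated together with the freedom in which coordinates $\sop A$'s $f_*$-queries are mapped to, so that the marked item (if present) lands inside $S$ with constant probability; the condition $M/N\le 1/9$ is what makes the relevant ``bad'' event (marked item outside the simulated $S$, or collisions forcing an invalid STO instance) have probability bounded below $1$ by a constant, so that repeating $O(1)$ times and taking a majority vote yields a bounded-error SEARCH algorithm. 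Crucially, each run of $\sop A$ costs only its $o(N)$ queries to $f_*$ (the $f_S$-queries are free since $S$ is known), so the resulting SEARCH algorithm uses $o(N)$ queries to $f_*$ --- contradicting Lemma~\ref{lem:classical_search} and proving the claim.

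The main obstacle, as indicated, is the probabilistic bookkeeping around the embedding: one must choose the reduction so that (i) the simulated STO instance genuinely satisfies the promise with constant probability, and (ii) when it does, correctness of $\sop A$ translates back to correctness for SEARCH, and (iii) the query count is preserved up to constants. Getting a clean constant in (i)--(iii) is exactly where the numerical threshold $M/N \le 1/9$ is spent; I would expect the rest to be routine.
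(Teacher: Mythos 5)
There is a genuine gap, and it is structural rather than a matter of bookkeeping. In your reduction you fix a uniformly random set $S$ \emph{in advance} and answer the $f_S$-queries of $\sop A$ for free, so the SEARCH algorithm you build makes only $q_*$ queries (plus repetitions). But the lemma asserts $q_*+q_S=\Omega(N)$, and no reduction in which simulated $f_S$-queries cost nothing can ever bound $q_S$: the statement would then have to follow from a bound on $q_*$ alone, and the best true bound on $q_*$ by itself is $\Omega(M)$ (this is exactly Lemma \ref{lem:Msearch}; note the paper's zero-error Algorithm \ref{alg:classical_useM} uses only $M$ queries to $f_*$). Your boosting idea makes this concrete: since your fixed $S$ contains $i_*$ only with probability $M/N$, you would need $\Theta(N/M)$ independent repetitions, not $O(1)$ (your reading of the bad event is reversed --- ``$i_*$ outside $S$'' has probability $1-M/N\geq 8/9$, which $M/N\le 1/9$ makes \emph{large}, not small), and $\Theta(N/M)$ repetitions of a $q_*$-query simulation yields at best $q_*=\Omega(M)$, again not the lemma. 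Restricting the search to $S$ does not help either, because then you are no longer solving the given SEARCH instance on $[N]$.

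The missing idea, which is how the paper proceeds, is to simulate $f_S$ \emph{using $f_*$ itself}: pick a random $T\subseteq[N]$ with $|T|=M-1$ and answer each $f_S$-query at $i$ by $\tilde{f}_S(i)=f_T(i)\vee f_*(i)$, at the price of one $f_*$-query per simulated $f_S$-query. This fixes both problems at once. First, the promise is now satisfied automatically whenever $i_*\notin T$ (take $S=T\cup\{i_*\}$, which has size $M$ and contains $i_*$), so the only bad event is $i_*\in T$, of probability $(M-1)/N\le 1/9$; this is where the hypothesis $M/N\le 1/9$ is actually spent, giving success probability at least $\tfrac{3}{4}\bigl(1-\tfrac{M-1}{N}\bigr)\ge \tfrac{2}{3}$ with no boosting or re-embedding. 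Second, because every $f_S$-query is charged one $f_*$-query, the resulting SEARCH algorithm makes $q_*+q_S$ queries, and Lemma \ref{lem:classical_search} then gives $q_*+q_S=\Omega(N)$ as claimed.
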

\begin{proof}
Suppose there is a randomized algorithm $\sop A$ that solves
STO with probability $3/4$ and
makes $q_*$ queries to $f_*$ and $q_S$ queries to $f_S$. Then we will 
use $\sop A$ to find $i_*$ in the case when we are given $f_*$ but not $f_S.$
To do this, we will use $f_*$ to create a function that behaves similarly
to $f_S.$ We choose a subset $T\in[N]$
with $|T|=M-1$ at random, and create a function $f_T$ that acts as
\begin{align}
 f_T(i)=
\begin{cases}
1 &\text{ if }i\in T\\
0 &\text{ if }i\notin T.
\end{cases}
\end{align}
Then we create the function $ \tilde{f}_S$ to simulate $f_S$, where
\begin{align}
 \tilde{f}_S(i)=f_T(i)\vee f_*(i).
 \end{align} 
Each time we want to query $\tilde{f}_S$, we must query $f_*(i)$.
Notice that $\tilde{f}_S$ behaves like a valid $f_S$ function unless
$i_* $ exists and $i_*\in T$ (because in this case $\tilde{f}_S$ marks
$M-1$ items instead of $M$.) $i_*\in T$ with probability
$\frac{M-1}{N.}$

We create $\tilde{f}_S$ as above, and we implement $\sop A$, but every time $\sop A$ 
asks us to apply $f_S$, we instead apply $\tilde{f}_S$. This new
algorithm will succeed with probability $3/4(1-(M-1)/N)\geq 2/3$,
because it succeeds with probability $3/4$ as long as $i_*\notin \sop F$. 
This means we have created an algorithm for 
standard search which uses 
$q_*+q_S$ queries to $f_*$ and which succeeds with probability 2/3. But 
by Lemma \ref{lem:classical_search}, we must have $q_*+q_S=\Omega(N).$
\end{proof}

Finally, we note that there is an additional restriction on the number of
queries to $f_*$:
\begin{lemma}\label{lem:Msearch}
Any randomized classical algorithm that solves STO with bounded probability must use
at least $\Omega(M)$ queries to $ f_*$.
\end{lemma}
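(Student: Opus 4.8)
\textbf{Proof proposal for Lemma \ref{lem:Msearch}.}

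The plan is to reduce the problem to SEARCH over a space of size roughly $M$, exactly mirroring the structure of the two reductions we have already carried out. The key observation is that when $i_*$ exists, it lies inside the set $S$, and $|S|=M$; so if an adversary were to reveal the set $S$ for free, the remaining task of locating $i_*$ among the $M$ elements of $S$ using only $f_*$ is precisely a SEARCH instance on $M$ items. Thus the intuition is clear: $f_S$ is useless once $S$ is known, and learning $S$ can only help. I would formalize this as follows.

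First I would take a randomized classical algorithm $\sop A$ for STO that makes $q_*$ queries to $f_*$ and $q_S$ queries to $f_S$ and succeeds with probability at least $3/4$ on all inputs. I would then use $\sop A$ to solve SEARCH on an instance $g:[M]\to\{0,1\}$ with at most one marked element. Given $g$, I embed it into an STO instance as follows: choose a uniformly random injection $\sigma:[M]\hookrightarrow[N]$, let $S=\sigma([M])$, and define $f_*(\sigma(j))=g(j)$ for $j\in[M]$ and $f_*(i)=0$ for $i\notin S$, and $f_S$ the indicator of $S$. Every query $\sop A$ makes to $f_S$ we answer ourselves for free (we know $S$), and every query to $f_*$ we answer using one query to $g$ when the queried index lies in $S$, and answer $0$ otherwise (no query to $g$ needed). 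This is a valid STO input with the correct promise ($|S|=M$, and if a marked item exists it is in $S$), so $\sop A$ outputs STO$(f_*,f_S)=$ OR$(g)$ with probability at least $3/4$, using at most $q_*$ queries to $g$. By Lemma \ref{lem:classical_search} applied with search-space size $M$, we need $q_*=\Omega(M)$.

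The main thing to be careful about is the promise structure: the SEARCH lower bound (Lemma \ref{lem:classical_search}) is stated for a function on $[N]$ with a zero-or-one promise, so I should invoke its obvious restatement for domain size $M$ (the hard distribution for SEARCH scales with the size of the search space, so this is immediate). A second small point is that STO also allows the ``no $i_*$'' case with $|S|$ arbitrary, but the inputs we construct always have $|S|=M$, so they are valid STO inputs in both the yes and no cases, and no extra handling is needed. I do not expect any real obstacle here — the reduction is a direct analogue of the preceding lemma, just with the roles of the two oracles effectively swapped (now $f_S$ is the ``free'' one because we plant $S$ ourselves). One could also simply observe this lemma is subsumed by the adversary argument of Lemma \ref{lemm:classical_zero_error} restricted to the yes-instances, but the clean reduction to SEARCH on $M$ items is the cleanest self-contained proof.
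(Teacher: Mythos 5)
Your reduction is correct and is essentially the paper's own argument made formal: the paper's proof likewise grants the algorithm knowledge of $S$ for free and invokes the SEARCH lower bound (Lemma \ref{lem:classical_search}) on the remaining $M$ candidates, exactly as your planted-instance simulation does. No gaps; your version just spells out the embedding explicitly.
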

\begin{proof}
Suppose the elements of the subset $S$
were known. Then in the worst case, that would still only narrow down the search to $M$ items. (This
is the worst case because if $|S|\neq M$, then one immediately knows there is no marked item.) One
must then perform a search for one marked item out of $M$, which requires $\Omega(M)$ 
queries via Lemma \ref{lem:classical_search}.
\end{proof}

Now we can state our lower bound on the query cost of STO:
\begin{theorem}
The bounded error classical randomized cost complexity of STO is
\begin{align}\label{eq:lb1}
RCC(STO)=\min&\left\{\Omega(c_SN+c_*M),\Omega(c_*N)\right\}.
\end{align}
\end{theorem}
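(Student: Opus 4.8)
The plan is to establish the two matching halves of Eq.~(\ref{eq:lb1}): the lower bound $RCC(STO) = \Omega(\min\{c_S N + c_* M,\, c_* N\})$ and the upper bound $RCC(STO) = O(\min\{c_S N + c_* M,\, c_* N\})$. For the lower bound, I would combine the three lemmas just proved. Write an arbitrary bounded-error algorithm as making $q_*$ queries to $f_*$ and $q_S$ queries to $f_S$, with cost $q_* c_* + q_S c_S$. We have two cases depending on whether the algorithm uses $f_S$ ``a lot'' or not. From the reduction-to-SEARCH lemma, $q_* + q_S = \Omega(N)$; from Lemma~\ref{lem:Msearch}, $q_* = \Omega(M)$. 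The key observation is that if $q_S \le N/2$ (say), then $q_* = \Omega(N)$, so the cost is $\Omega(c_* N)$; whereas if $q_S > N/2$, the cost is at least $q_S c_S = \Omega(c_S N)$, and adding the separate bound $q_* c_* = \Omega(c_* M)$ gives cost $\Omega(c_S N + c_* M)$. In either case the cost is $\Omega(\min\{c_S N + c_* M,\, c_* N\})$. One subtlety is that the SEARCH reduction lemma requires $M/N \le 1/9$; for $M/N$ bounded away from $0$ the bound $\Omega(c_* M) = \Omega(c_* N)$ already subsumes everything, so that regime is handled separately and trivially.

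For the upper bound, I would exhibit two bounded-error algorithms, which are the natural bounded-error analogues of Algorithms~\ref{alg:classical_all} and~\ref{alg:classical_useM}. The first simply runs the classical search for $i_*$ using only $f_*$: this needs $O(N)$ queries to $f_*$ for constant error, for cost $O(c_* N)$. The second first samples $O(N)$ items and queries $f_S$ to estimate $|S|$ and (with high probability) locate all of $S$ — actually one wants to be a bit careful, but querying $f_S$ on $\Theta(N)$ items suffices to determine $S$ with the promise $|S|=M$ — at cost $O(c_S N)$, and then runs classical search for $i_*$ within the $M$ candidate blocks using $O(M)$ queries to $f_*$, for cost $O(c_* M)$; the total is $O(c_S N + c_* M)$. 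Taking the better of the two algorithms gives the upper bound, and combined with the $0$-error intuition of Lemma~\ref{lemm:classical_zero_error} this shows the bounded-error cost has the same form as the exact cost up to constants.

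The main obstacle I anticipate is the careful bookkeeping in the lower-bound case split, specifically making sure the constants in the three input lemmas compose correctly and that the promise regime $M/N \le 1/9$ versus $M/N = \Omega(1)$ is handled cleanly so the final ``$\min$'' comes out with the stated form. A secondary point worth getting right is that $q_* c_* + q_S c_S \ge \max\{q_* c_*, q_S c_S, \tfrac12(q_* c_* + q_S c_S)\}$, i.e.\ that combining ``$q_* + q_S = \Omega(N)$'' with ``$q_* = \Omega(M)$'' really does yield the claimed two-branch bound rather than just $\Omega(\max\{c_* M, c_S N\})$ — here one uses that $c_* \ge c_S$ to argue $q_* c_* + q_S c_S = \Omega(c_S(q_*+q_S) + c_* q_*) = \Omega(c_S N + c_* M)$ in the $q_S$-heavy branch. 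Everything else is routine.
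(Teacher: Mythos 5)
Your proposal is correct and follows essentially the same route as the paper: the lower bound combines the same two lemmas ($q_*=\Omega(M)$ and $q_*+q_S=\Omega(N)$ for $M/N\le 1/9$), and your explicit case split on $q_S$ is just a hand-solution of the linear program the paper writes down, with the $M/N>1/9$ regime handled identically via $\Omega(c_*M)=\Omega(c_*N)$. The upper bound you add is the same as what the paper already gets from Lemma \ref{lemm:classical_zero_error} (Algorithms \ref{alg:classical_all} and \ref{alg:classical_useM}), so nothing new is needed there.
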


\begin{proof}
When $M/N\leq 1/9$, we solve the following linear program:
\begin{align}
\textrm{minimize: } &q_*c_*+q_\sop S c_\sop S\no
\textrm{subject to: }& q_*\geq f_1(M,\epsilon)\no
&q_*+q_\sop S \geq f_2(N,M,\epsilon).
\end{align}

When $M/N> 1/9$, from Lemma \ref{lem:Msearch}, we have have $q_*=\Omega(M)=\Omega(N)$, so the
cost is as least $\Omega(c_*M)=\Omega(c_*N).$
\end{proof}

Comparing Eq. (\ref{eq:lb1}) with Eq. (\ref{eq:cost12}), we see that there is always 
a separation between the quantum and classical costs of STO. In particular,
to get the quantum scaling from the classical scaling, simply replace all 
$M$'s and $N$'s by $\sqrt{M}$ and $\sqrt{N}$.

%%%%%%%%%%%%%%%%%%%%%%%%%%%%%%%%%%%%%%%%%%%%%%%%%%%%%%%%%%%%%%%%%%%%%%%%%%%%%%%%%%%
\section{Conclusions and Open Questions}
While query complexity is a well understood and powerful tool for quantifying
the power of quantum computers, there are still problems that are not easily
characterized by query complexity. Cost complexity is one way of extending
the standard query model, and we've argued that this approach has potential applications in 
constraint satisfaction problems. 

While we motivated STO with problems like $k$-SAT, graph isomorphism,
and the traveling salesman problem,
it is not obvious how much of a speed-up an STO inspired
algorithm for these problems would be. The speed-up in STO depends critically
on $N,$ $M,$ $c_*,$ and $c_s$. It would be interesting
to calculate approximately what this relationship is, for example,  in
a random $k$-SAT instance. Once this relationship is better
understood, we could determine the amount of speed-up
an STO algorithm would give for such a problem.
However, even with a better understanding of this relationship,
it is unlikely that $M$ would be known exactly. In that case, a method
such as fixed point search \cite{YLC14} might be helpful.

STO is a very simple extension of a search
problem, and thus the methods described here all
have a Grover-ish flavor to them. It would be interesting to 
find well motivated problems for the cost complexity model where other
quantum algorithms could be employed.

We have also left open the question of the exact cost of STO. We believe
our algorithm is optimal, but it seems new techniques
are needed to prove it.

\section{Acknowledgments}
The authors would like to thank Edward Farhi, Andrew Childs, and Aram Harrow for
illuminating discussions. Funding for SK provided by the Department of Defense. HSL and CYL are supported by the ARO grant Contract Number W911NF-12-0486. 
CYL gratefully acknowledges support from the Natural Sciences and Engineering Research Council of Canada.

\bibliography{MObib}
\bibliographystyle{plain}

\appendix

%%%%%%%%%%%%%%%%%%%%%%%%%%%%%%%%%%%%%%%%%%%%%%%%%%%%%%%%%%%%%%%%%%%%%%%%%%%%%%%%%%%%%%%%%%%%%%%%%%%%%%%%%%%%%%%%%%%%%%%%%%%%%%%%%%%%%%%%%%%%
%%%%%%%%%%%%%%%%%%%%%%%%%%%%%%%%%%%%%%%%%%%%%%%%%%%%%%%%%%%%%%%%%%%%%%%%%%%%%%%%%%%%%%%%%%%%%%%%%%%%%%%%%%%%%%%%%%%%%%%%%%%%%%%%%%%%%%%%%%%%
%%%%%%%%%%%%%%%%%%%%%%%%%%%%%%%%%%%%%%%%%%%%%%%%%%%%%%%%%%%%%%%%%%%%%%%%%%%%%%%%%%%%%%%%%%%%%%%%%%%%%%%%%%%%%%%%%%%%%%%%%%%%%%%%%%%%%%%%%%%%
\section{Analysis of the Hybrid Algorithm}\label{app:cost}
Throughout this section, when we are calculating something ``to zeroth order'', we drop 
terms whose sizes are $O(M^{-1/2})$ or $O((M/N)^{1/4})$ multiplied by the size of the largest term.

In Section \ref{sec:algo}, Eq. (\ref{eq:cost_hybrid_pre}), we showed that the cost of the Hybrid Algorithm is
\begin{align}\label{eq:orig_cost}
 \textrm{Cost}(\textrm{Hybrid})=&\left\lceil\frac{\arcsin\sqrt{1-\epsilon} - \arcsin \frac{\sqrt{1-\alpha}}{\sqrt{M}}} {2\arcsin\frac{\sqrt{1-\alpha}}{\sqrt{M}}}\right\rceil
\nonumber\\
&\times\left(c_*+2c_S\left\lceil\frac{\left(\arcsin{\sqrt{1-\alpha}} - \arcsin \sqrt{\frac{M}{N}}\right)}{2 \arcsin \sqrt{\frac{M}{N}}}\right\rceil\right).
\end{align}
In this appendix, we prove that in the limit of  $M\rightarrow \infty$ and $N/M\rightarrow \infty,$ there is a choice of $\alpha$ such
that the cost is
\begin{align}\label{eq:fund_cost_app}
 \textrm{Cost}(\textrm{Hybrid})=\frac{c_S\sqrt{N}\arcsin\sqrt{1-\epsilon} }{2} \sec \left(\phi_{opt}+\sqrt{\frac{M}{N}}\right),
\end{align}
where $\phi_{opt}$ is given by
\begin{align}\label{eq:kcond_app}
\phi_{opt}=\max
\begin{cases}
0\\
\phi:\tan\left(\phi+\sqrt{\frac{M}{N}}\right)=\phi+\frac{c_*}{c_S}\sqrt{\frac{M}{N}}.
\end{cases}
\end{align}

We first define
\begin{align}
t=\left\lceil\frac{\left(\arcsin{\sqrt{1-\alpha}} - \arcsin \sqrt{\frac{M}{N}}\right)}{2 \arcsin \sqrt{\frac{M}{N}}}\right\rceil,
\end{align}
so $t$ is a non-negative integer.
Substituting $t$ for $\alpha$ in Eq. (\ref{eq:orig_cost}), we obtain
\begin{align}
 \textrm{Cost}(\textrm{Hybrid}) =&  (2tc_S + c_*)  \nonumber\\
 &\times\left\lceil\arcsin\sqrt{1-\epsilon}
 \left[2\arcsin \left( \frac{\sin\left((2t+1) \arcsin \sqrt{\frac{M}{N}}\right)} {\sqrt{M}} \right)\right]^{-1} -1/2\right\rceil.
\end{align}
To zeroth order, this becomes
\begin{align}
\textrm{Cost}(\textrm{Hybrid}) &=  \frac{(2tc_S + c_*)  \sqrt{M}\arcsin\sqrt{1-\epsilon}}{
2\sin\left((2t+1)\sqrt{\frac{M}{N}} \right)}.
\end{align}
Finally, we denote $\phi=2t\sqrt{M/N}$
to obtain
\begin{align}\label{eq:cost_plugin}
\textrm{Cost}(\textrm{Hybrid}) &=
\frac{\left(\phi c_S + \sqrt{\frac{M}{N}}c_*\right)  
\sqrt{N}\arcsin\sqrt{1-\epsilon}}{
2\sin\left(\phi+\sqrt{\frac{M}{N}} \right)}.
\end{align}

We take the partial
derivative of the cost with respect to $\phi $, and set it to zero
to find the value of $\phi$ that gives the smallest cost. We find the cost is minimized
when $\phi=\phi_{opt},$ where $\phi_{opt}$ satisfies
\begin{align}\label{eq:phimin}
\tan\left(\phi_{opt}+\sqrt{M/N}\right) &= \phi_{opt} + \frac{c_*}{c}\sqrt{M/N}.
\end{align}
Notice that there is always a solution with $\phi_{opt}\in[-\sqrt{M/N},\pi/2]$. However $t$ is 
non-negative, so if $\phi_{opt}<0$ we set $\phi_{opt}=0$. This condition,
along with Eq. (\ref{eq:cost_plugin}) and Eq. (\ref{eq:phimin}), immediately gives the cost claimed in 
Eq. (\ref{eq:fund_cost_app}).

We might not be able to exactly attain this cost, because
$t$ must be an integer, so we might only be able to set $\phi$ close
to $\phi_{opt}$. We show that even if we can't set $\phi$
exactly to $\phi_{opt}$, we can still attain the cost of Eq. (\ref{eq:fund_cost_app}),
to zeroth order.

There are two cases to consider.
 In the first case, we assume $(M/N)^{1/4}\leq\phi_{opt}\leq \pi/2.$
 We require that $t$ be a non-negative integer, so we choose 
$
t=\left\lceil(\phi_{opt}\sqrt{N})/(2\sqrt{M})\right\rceil,
$
and hence we set
\begin{align}
\phi=\left\lceil\frac{\phi_{opt}}{2}\sqrt{\frac{N}{M}}\right\rceil2\sqrt{\frac{M}{N}}.
\end{align}
For that choice, notice that 
\begin{align}\label{eq:phi_to_phiopt}
\phi-\phi_{opt}=O\left((M/N)^{1/2}\right).
\end{align} 
This allows us to relate terms involving $\phi$ to those involving $\phi_0$:
\begin{align}\label{eq:approx_subs}
\sin\left(\phi+\sqrt{M/N}\right)=& \sin\left(\phi_{opt}+\sqrt{M/N}\right)\pm O((M/N)^{1/2})\no
=&\sin\left(\phi_{opt}+\sqrt{M/N}\right)\left(1\pm O((M/N)^{1/4})\right)\no
=&\left(\phi_{opt} + \frac{c_*}{c_S}\sqrt{M/N}\right)\cos\left(\phi_{opt}+\sqrt{M/N}\right)\left(1\pm O\left((M/N)^{1/4}\right)\right)\no
=&\left(\phi + \frac{c_*}{c_S}\sqrt{M/N}\right)\cos\left(\phi_{opt}+\sqrt{M/N}\right)\left(1\pm O\left((M/N)^{1/4}\right)\right),
\end{align}
where in the first line, we use the angle addition formula and Eq. (\ref{eq:phi_to_phiopt}); 
in the second,
we use the assumption that $\phi_{opt}\geq (M/N)^{1/4}$; in the
third line we applied Eq. (\ref{eq:phimin}); and in the last we have used Eq. (\ref{eq:phi_to_phiopt})
and the assumption on the size of $\phi_{opt}$.
Plugging Eq. (\ref{eq:approx_subs}) into our expression 
for the cost in Eq. (\ref{eq:cost_plugin}), we have that to
zeroth order, we obtain 
%\begin{align}\label{eq:costcase1}
%\textrm{Cost}(\textrm{Hybrid}) &= \frac{\arcsin\sqrt{1-\epsilon} }{2} c_S\sqrt{N}
% \sec \left(\phi_{opt}+\sqrt{M/N}\right),
%\end{align}
%which is
%the same as 
Eq. (\ref{eq:fund_cost_app}), as desired.

We now consider the second case, when $0\leq \phi_{opt}<(M/N)^{1/4}$.
In this case, we simply set $t=0,$ and hence $\phi=0.$ Plugging $\phi=0$ the cost
of Eq. (\ref{eq:cost_plugin}), we have, 
to zeroth order,
\begin{align}\label{eq:cost0}
\textrm{Cost}(\textrm{Hybrid}) &=  \frac{\arcsin\sqrt{1-\epsilon}\sqrt{N}}{2} c_*.
\end{align}
We will show that Eq. (\ref{eq:cost0}) and Eq. (\ref{eq:fund_cost_app}) are
equivalent for $0\leq \phi_{opt}<(M/N)^{1/4}$. We have 
\begin{align}\label{eq:expandsec}
\sec\left(\phi_{opt}+\sqrt{M/N}\right)=1+O\left((M/N)^{1/4}\right).
\end{align}
We can expand Eq. (\ref{eq:phimin}) to get
\begin{align}\label{eq:ccstar}
c_S=c_*\left(1-O\left((M/N)^{1/4})\right)\right).
\end{align}
Plugging Eqs. (\ref{eq:expandsec}) and (\ref{eq:ccstar})  into Eq. (\ref{eq:fund_cost_app}) and keeping
only zeroth order terms, we recover
%have
%\begin{align}
%\textrm{Cost}(\textrm{Hybrid}) &\leq \frac{\arcsin\sqrt{1-\epsilon}\sqrt{N} }{2} c_*,
%\end{align}
%which is the same as 
Eq. (\ref{eq:cost0}).

%%%%%%%%%%%%%%%%%%%%%%%%%%%%%%%%%%%%%%%%%%%%%%%%%%%%%%%%%%%%%%%%%%%%%%%%%%%%%%%%%%%%%%%%%%%%%%%%%%%%%%%%%%%%%%%%%%%%%%%%%%%%%%%%%%%%%%%%%%%%
%%%%%%%%%%%%%%%%%%%%%%%%%%%%%%%%%%%%%%%%%%%%%%%%%%%%%%%%%%%%%%%%%%%%%%%%%%%%%%%%%%%%%%%%%%%%%%%%%%%%%%%%%%%%%%%%%%%%%%%%%%%%%%%%%%%%%%%%%%%%
%%%%%%%%%%%%%%%%%%%%%%%%%%%%%%%%%%%%%%%%%%%%%%%%%%%%%%%%%%%%%%%%%%%%%%%%%%%%%%%%%%%%%%%%%%%%%%%%%%%%%%%%%%%%%%%%%%%%%%%%%%%%%%%%%%%%%%%%%%%%
\section{Proof of Theorem \ref{thm:geo}}\label{app:geo}
In this section, we prove the following theorem:

\geo*

\begin{proof} 
Throughout this section, when we say to zeroth order, we mean dropping terms of size at most $O(M^{-1/2})$ or $O\left((M/N)^{1/2}\right)$ or $O(C)$ of the zeroth
order terms.

Since we only consider the operations $\sop O_S$, $\sop O_*$, and $G$,
the state of the system never leaves the three-dimensional space
spanned by the orthonormal states
\begin{align}
\left\{
\begin{array}{ccc}
\ket {i_*},&
\ket{S^-} = \frac{1}{\sqrt{M-1}}\sum_{i \in S-\{i_*\}}
\ket i,&
\ket {S^\perp} = \frac{1}{\sqrt{N-M}} \sum_{i\notin S} \ket i
\end{array}
\right\}.
\label{eq:standard_bases}
\end{align}

It turns out that it is more convenient to work in a slightly shifted basis from 
that of Eq. (\ref{eq:standard_bases}). We instead use the orthonormal basis states:
\begin{align}
\ket{x}&= \cos\theta_0\ket{i_*} - \sin\theta_0\ket{S^-}, \nonumber\\
\ket{y}  &= \cos\phi_0\sin\theta_0 \ket{i_*}+\cos\phi_0\cos\theta_0\ket{S^-} - \sin\phi_0 \ket{S^\perp}, \nonumber\\
\ket{z} &=\sin\phi_0\sin\theta_0\ket{i_*}+\cos\theta_0\sin\phi_0\ket{S^-}+\cos\phi_0\ket{S^\perp} \nonumber\\
&= \ket{N}.
\end{align}
%\ket{z} &=\frac{1}{\sqrt{N}} \ket{i_*}+\sqrt{\frac{M-1}{N}}\ket{S^-}+\sqrt{\frac{N-M}{N}}\ket{S^\perp}
%\ket{y}  &= \sqrt{\frac{N-M}{MN}} \ket{i_*}+\sqrt{\frac{(N-M)(M-1)}{MN}} \ket{S^-} - \sqrt{\frac{M}{N}} \ket{S^\perp} \\
%\ket{x}&= \sqrt{\frac{M-1}{M}}\ket{i_*} - \frac{1}{\sqrt{M}}\ket{S^-}, \\
%\ket{x'} &= \cos \theta_0\ket{x} - \sin\theta_0\ket{y} \\
%            &= \sqrt{\frac{M-1}{M}}\ket{i_*} + \frac{1}{\sqrt{M(M-1)}} \sum\limits_{i \in S - \{i_*\}} \ket{i}, \\
%\ket{y'} &= \cos\phi_0\ket{S} - \sin\phi_0\ket{z} 
%            = \cos\phi_0(\sin\theta_0\ket{x} + \cos\theta_0\ket{y}) + \sin\phi_0\ket{z} \\
%            &= \sqrt{\frac{N-M}{MN}} \sum\limits_{i \in S} \ket{i} + \sqrt{\frac{M}{(N-M)N}} \sum\limits_{i \not\in S} \ket{i} \\
%\ket{z'} &= \ket{\psi} = \sin\phi_0\ket{S} + \cos\phi_0{z} \\
We can think of these states as forming the axes of a 3-dimensional space,
where a state 
\begin{align}
\ket{\chi}=x\ket{x}+y\ket{y}+z\ket{z}
\end{align}
is identified with the point $(x,y,z)$.
Then if the algorithm
is initialized in the equal superposition state $\ket{z}$, the goal of the 
algorithm is to move from the $\ket{z}$-axis towards the $\ket{x}$-axis.

Since any normalized state of the system corresponds to a point on the unit sphere in this space, 
let us now introduce polar coordinates, with the $\ket{x}$-axis as
the polar axis. Specifically, we associate the state $\ket{\chi}$
with the polar coordinates $(\theta, \phi)$, where
\begin{equation} 
x = \sin\theta, \quad y = \cos \theta \sin \phi, \quad z =
\cos\theta\cos\phi 
\end{equation} 
for
$\theta \in [-\pi/2,\pi/2].$ (The variable $\phi$ in this section plays a nearly
identical role to $\phi$ in Appendix \ref{app:cost}, so we use the same variable name.)

If we multiply a state by $-1$, this transforms the coordinates  
from $(\theta,\phi)$ to
$(-\theta,\phi+\pi)$.
Because overall phases do not affect the state, we can apply this transformation 
for free. In particular, we use it to 
``pick a gauge'' and choose the coordinates
that satisfy $\theta\geq0$.

%Any normalized state with real amplitudes lieing in the
%span of this basis thus can be represented as a point on the unit sphere (with
%$x'$-, $y'$- and $z'$- axes corresponding to the basis states), with antipodes
%identified; unitary transformations correspond to rotations or reflections on
%this sphere.  

%(Note that since the points $(x',y'z')$ and $(-x',-y',-z')$ are
%identified, in polar coordinates the points $(\theta,\phi)$ and
%$(-\theta,\phi+\pi)$ are identified as well.)

For a Grover-like algorithm which finds the marked state with high success
probability, the algorithm starts at the point $(\theta = 0, \phi = 0)$,
and must end near $\theta=\pi/2$. We define a progress function $H(\theta, \phi)$, for
$\theta>0$, as
\begin{equation} \label{eq:cost_def}
%|\theta| - k \min\left\{\left|\phi -\frac{\pi}{2}\right|,\left|\phi + \frac{\pi}{2}\right|\right\}
H(\theta,\phi)= \theta - k    
 \min\limits_{\ell \in \mathbb{Z}}  |\phi + 2\ell\pi -
\pi/2|,
\end{equation} 
where 
\begin{align}\label{eq:kdeg}
k&=\theta_0\cos(\phi_{opt}+\phi_0),\\
\label{eq:phi0_def}
\phi_0&=\arcsin\sqrt{M/N},\\
\label{eq:theta0_def}
\theta_0&=\arcsin\sqrt{1/M},\\
\label{eq:kcond_f}
\phi_{opt}&=\max
\begin{cases}
0\\
\phi:\tan(\phi+\phi_0)=\phi+\frac{c_*}{c}\phi_0.
\end{cases}
\end{align}
The second term of $H(\theta, \phi)$ is proportional to the angular distance of
$\phi$ to $\pi/2$ (taken so the distance is $< \pi$).

Before we analyze how each unitary changes the progress function, we
will look at the total progress that must occur for the algorithm to
succeed. The total progress gained by the algorithm must be larger
than the difference between the value of the progress function at the starting point and
the end point. We pick the starting point as the \emph{last} time the
algorithm increases $\theta$ from less than $2 \theta_0$ to more than
$2\theta_0$, and $\phi\geq0$. (We require $\phi\geq 0$ for Lemma \ref{lemm:neg_phi},
and we require $\theta\geq 2\theta_0$ in order to calculate the progress
due to $\sop O_*.$) We will show later that such a point
will always exist for any successful algorithm, and also that at such
a point $\theta<6\theta_0.$ Thus the value of the progress function at the starting
point is at most $6\theta_0$.

%For the end point of the algorithm, note that if 
%\begin{align}
%\theta=\arcsin\sqrt{1-\epsilon}
%\end{align}
%(where we've dropped terms of $O(M^{-1/2})$), the probability of success is
%$1-\epsilon$, so the algorithm can stop. In particular, the algorithm stops
%if 
%\begin{align}\label{eq:bound_cos_theta}
%\cos(\theta)<\sqrt{\epsilon}.
%\end{align}
For the end point of the algorithm, note that the probability of success is 
\begin{align}\label{eq:epsilon_bound}
\sin^2(\theta)>1- \epsilon,
\end{align}
to zeroth order. 
Thus the total change in progress function
is at least
\begin{align}\label{eq:bound_cos_theta}
\arcsin\sqrt{1-\epsilon}-k \pi -6\theta_0>\arcsin\sqrt{1-\epsilon}-(6+ \pi)\theta_0,
\end{align} 
where we bound $k$ using Eq. (\ref{eq:kdeg}), and the $k\pi$ term comes
from the worst possible value of $\phi$ when $\theta$ gets sufficiently large.

We note the following: from Eq. (\ref{eq:fund_cost}) and  Eq. (\ref{eq:kdeg}) we see that the 
cost of the optimal algorithm is at most 
\begin{align}\label{eq:drop_terms}
\frac{c_S\arcsin\sqrt{1-\epsilon}}{\phi_0 k},
\end{align}
and from Eq. (\ref{eq:bound_cos_theta}) the change in the progress function is at least $\arcsin\sqrt{1-\epsilon}-(6+\pi)\theta_0$; 
therefore the progress per unit cost must be at least $\phi_0 k/c_S$,
to zeroth order. 
It therefore follows that when calculating the change in progress
function, we only need to keep track of terms up to order $O(\phi_0 k
/ c_S)$ per cost. For example, for $\sop O_*$, we need only keep
track of the change in {\it{progress}}  (not progress per cost)
up to order $O(\phi_0 k c_*/c_S)$.

The change in the progress function $H(\theta,\phi)$ due to the unitaries $G$, $\sop O_S$, and $\sop O_*$
can be calculated by how they change the coordinates $(\theta,\phi)$
of a state.
After some algebra and using our gauge choice, we obtain  %assuming that before the application of the unitary, $\theta\geq 0$.
\begin{itemize}
\item $G$: The unitary $G$ is a reflection about the $z$-axis, 
and in polar coordinates is the map 
\begin{align}\label{eq:G_angle_change}
G:\: (\theta,\phi) &\rightarrow %(-\theta, -\phi )\no &\cong
(\theta,\pi-\phi).
\end{align}
Comparing with Eq. (\ref{eq:cost_def}), we see $G$ has no effect on the progress function.

\item $\sop O_S$: The oracle $\sop O_S$ %= I - 2\ket{S}\bra{S}$ 
is a reflection about the state  %$ \cos\phi_0\ket{z} -\sin\phi_0\ket{y}$, which 
with has polar coordinates $(\theta=0,\phi = -\phi_0)$.
%Thus $\sop O_S$ is the map
\begin{align}\label{eq:S_angle_change}
\sop O_S:\: (\theta,\phi) &\rightarrow %(-\theta, 2\phi_0 - \phi) \no&\cong 
(\theta, \pi - \phi - 2\phi_0)
\end{align}
We see that $\sop O_S$ can change the progress function by at most
$2\phi_0k$. Thus the increase in the progress function per cost due
to $\sop O_S$ is at 
most
\begin{align}\label{eq:OS_increase}
\frac{2\phi_0k}{c_S} = \frac{2\phi_0\theta_0\cos(\phi_{opt}+\phi_0)}{c_S}.
\end{align}

\item $\sop O_*$: The oracle $\sop O_*$ % = I - 2\ket{i_*}\bra{i_*}$ 
is a reflection about
the state $\ket{i_*}$, which is close to $\ket{x}$. We find $\sop O_*$
transforms coordinates as
\begin{align}
\theta &\rightarrow \theta + 2\theta_0\sin(\phi+\phi_0) + O(\theta_0^2) \\
\phi &\rightarrow %&= \pi +\phi- 2\theta_0\tan\theta \cos(\phi+\phi_0) + O\left(\frac{\theta_0^2}{\cos\theta}\right)\no
\pi +\phi+O\left(\frac{\theta_0}{\cos\theta}\right).\label{eq:ostar_phi_change}
\end{align}
%While it is not obvious yet that the term $2\theta_0\tan\theta \cos(\phi+\phi_0)$ 
%can be dropped, we will show that the effect of this term on the cost is
%small enough to be ignored. For now, we will use the big $O$ term as a place holder.
\end{itemize}

Now we consider how $\sop O_*$ affects the progress function; unlike the previous cases, 
which we calculated exactly, we will only analyze this case to zeroth order.
We will first show that we can assume $|\phi| \le \pi/2$. 
Suppose that $|\phi| > \pi/2$ just before we would like to apply $\sop O_*$. Then
instead of applying $\sop O_*$, we apply $G\sop O_* G$. One can check
that with this replacement, when $\sop O_*$ is applied, $|\phi|\le\pi/2.$
%The effect of $\sop O_*$ is
%\begin{align}
%\sop O_*: (\theta,\phi)\rightarrow (\theta' = \theta + 2\theta_0\sin(\phi+\phi_0),
% \phi' = \pi +\phi+O\left(\frac{\theta_0}{\cos\theta}\right)
% ).
%\end{align}
%We can instead replace $\sop O_*$ with $G\sop O_*G$, obtaining the coordinate transformation
%\begin{align}
%G\sop O_* G: (\theta,\phi)\rightarrow (\theta' = \theta+2\theta_0\sin(\phi-\phi_0),
%\phi' = \pi+\phi + O\left(\frac{\theta_0}{\cos\theta}\right)).
%\end{align}
%
%Assume first, for the sake of contradiction, that $|\phi| > \pi/2$ just before $\sop O_*$ is applied. The effect of $\sop O_*$ is
%\begin{align}
%\sop O_*: (\theta,\phi)\rightarrow (\theta' = \theta + 2\theta_0\sin(\phi+\phi_0),
% \phi' = \pi +\phi-2\theta_0\tan\theta\cos(\phi+\phi_0) + O\left(\frac{\theta_0^2}{\cos\theta}\right)
% ).
%\end{align}
%We can instead replace $\sop O_*$ with $G\sop O_*G$, obtaining the coordinate transformation
%\begin{align}
%G\sop O_* G: (\theta,\phi)\rightarrow (\theta' = \theta+2\theta_0\sin(\phi-\phi_0),
%\phi' = \pi +\phi-2\theta_0\tan\theta\cos(\phi-\phi_0) + O\left(\frac{\theta_0^2}{\cos\theta}\right)
%).
%\end{align}
%We first compare the values of $\theta'$ between $\sop O_*$ and $G\sop O_*G$. 
%Notice that if $|\phi|\geq \pi/2$ 
%\begin{align}
%2\theta_0\sin(\phi-\phi_0)- 2\theta_0\sin(\phi+\phi_0) 
%&= -4 \theta_0 \cos \phi \sin \phi_0 
% \no &= O(\theta_0\phi_0) \no
% &\geq 0
%\end{align}
Furthermore one can verify that this replacement causes $\theta$ to increase (which can only be good for the 
progress function), while on the other hand, the value of $\phi$
changes by at most $O\left(\theta_0/\cos\theta\right)$ due to this
replacement, resulting in a change in the progress function of size
$O\left(k\theta_0/\sqrt{\epsilon}\right)$ (using Eq. (\ref{eq:epsilon_bound}) 
to bound $\cos\theta)$. 
Using our assumption that that $C = o(1)$, this change has order less 
than $O(\phi_0 k c_*/c_S)$, and so can be discarded using the argument
following Eq. (\ref{eq:drop_terms}). 
We can therefore assume that $\sop O_*$ is always applied at $|\phi|\leq \pi/2.$
%However, we will instead give the algorithm the following
%extra advantage: whenever it applies $\sop O_*$ at $|\phi|\leq \pi/2$,
%it can immediately increase the progress function by an amount 
%$O\left(\frac{k\theta_0}{\cos\theta}\right)$, and also shift $\phi$ in either
%direction by an amount $O\left(\frac{\theta_0}{\cos\theta}\right)$. 
%With these assumptions, we can assume without loss of generality that
%$\sop O_*$ is always applied at $|\phi|\leq \pi/2.$

Now we can examine the change in the progress function due 
to the action of $\sop O_*$. The increase in the progress function is
\begin{align}
& 2\theta_0\sin(\phi + \phi_0) + O\left(\theta_0^2\right) \no
& - k\left(\min\limits_{\ell \in \mathbb{Z}} 
 |-\phi +2\ell\pi - \pi/2| - \min\limits_{\ell \in \mathbb{Z}}  |\phi + 2\ell\pi - \pi/2|\right) + O\left(\frac{k\theta_0}{\cos\theta}\right).
 \end{align}
%We track the progress function when $\theta > 2\theta_0$, and therefore
%\begin{align}
% |\theta + 2\theta_0\sin(\phi + \phi_0)| =\theta + 2\theta_0\sin(\phi + \phi_0).
% \end{align}
 Since $|\phi|\leq \pi/2,$ the  increase in the progress function due to $\sop O_*$ is less than
\begin{align}\label{eq:cost_with_k}
 2\theta_0\sin(\phi+\phi_0) - 2\phi\theta_0\cos(\phi_{opt}+\phi_0) +O\left(\frac{\theta_0^2}{\sqrt{\epsilon}}\right),
\end{align}
where we have used the value of $k$ from Eq. (\ref{eq:kdeg}) and bounded $\cos\theta$ with Eq. (\ref{eq:epsilon_bound}).

Taking the first and second derivatives of Eq. (\ref{eq:cost_with_k})
with respect to $\phi$, we see
that when $\phi\geq 0$, the increase in the
progress function is maximized when $\phi=\phi_{opt}$.  It turns out
that if one applies $\sop O_*$ at $\phi<0$, it is sometimes possible
to achieve a larger increase in progress per cost than when $\phi\ge 0$. However,
we show at the
end of this section, (Lemma \ref{lemm:neg_phi}), that applying $\sop
O_*$ when $\phi < 0$ will always be less efficient (up to higher
order terms) in terms of the increase in progress function per cost, than
applying $\sop O_*$ at $\phi = \phi_{opt}$, when viewed in the context of the
larger algorithm. Applying the definition of
$\phi_{opt}$  from Eq. (\ref{eq:kcond_f}) to Eq. (\ref{eq:cost_with_k}), and using
the definition of $C$ from Eq. (\ref{eq:stupid_condition}),
the increase in the progress
function  due to $\sop O_*$ is less than
\begin{align}\label{eq:increase}
\frac{c_*2\phi_0\theta_0\cos(\phi_{opt}+\phi_0)}{c_S}
\left(1+O(\phi_0^2)+O\left(C\right)\right),
\end{align}
where the $O(\phi_0^2)$ term accounts for the case that $\phi_{opt}=0$.

%Thus in the limit of $M/N\gg 1$ and $C\ll 1$, we have
%%the increase in progress is less than 
%%\begin{align}\label{eq:increase}
%%\frac{c_*2\phi_0\theta_0\cos(\phi_{opt}+\phi_0)}{c_S}+
%%O\left(\frac{\theta_0^2}{\sqrt{\epsilon}}\right).
%%\end{align}
%%Using Eq. (\ref{eq:stupid_condition}),
% that the increase in progress function due to $\sop O_*$ per cost is less than
%\begin{align}\label{eq:increase_cost}
%\frac{2\phi_0\theta_0\cos(\phi_{opt}+\phi_0)}{c_S}.
%\end{align}
From Eq. (\ref{eq:OS_increase}) and Eq. (\ref{eq:increase})
we see that (to zeroth order) the maximum increase in the progress function per cost is the same
whether $\sop O_*$ is applied or $\sop O_S$ is applied.
Dividing the total necessary change in progress (Eq. (\ref{eq:bound_cos_theta})) by the maximum change in progress per cost
(Eq. (\ref{eq:increase}))
gives us the minimum cost:
\begin{align}
\arcsin\sqrt{1-\epsilon}
\frac{c_S}{2\phi_0\theta_0\cos(\phi_{opt}+ \phi_0)}\left(1-O(C)-O(M^{-1/2})-O\left((M/N)^{-1/2}\right)\right).
\end{align}
In the limit of $N,M\rightarrow\infty$ and $C\rightarrow 0$, (to zeroth order) we have that the cost is
at least
\begin{align}
\arcsin\sqrt{1-\epsilon}
\frac{c_S\sqrt{M}}{2\phi_0\cos(\phi_{opt}+ \phi_0)},
\end{align}
which matches the cost of Eq. (\ref{eq:fund_cost}).

We now justify why the value of the progress function must be less than $6\theta_0$
 when we start tracking it. Immediately before we start tracking the progress function,
we have $\theta<2\theta_0,$ so the bound on the increase in progress given by Eq. (\ref{eq:cost_with_k})
does not necessarily apply. However, it is simple to show that the increase in the progress
function due to $\sop O_*$ is always bounded by
$2\theta_0,$
where we have dropped terms of $O(\theta_0^2/\sqrt{\epsilon})$ as before. 
Thus if $\theta<2\theta_0$, and then $\sop O_*$ is applied, $\theta$ can increase by at most
$2\theta_0$, and so the new value of $\theta$ satisfies $\theta<4\theta_0.$ At this point,
$\theta>2\theta_0,$ but $\phi$ might be negative. Notice that $\theta$ can not increase
unless $\sop O_*$ is applied, (and $\theta$ must increase in order to obtain a high
probability of success) but $\sop O_*$ flips the sign of $\phi,$ so after applying
$\sop O_*$ at most one more time, we will have both the conditions $\theta>2\theta_0$
and $\phi\geq0$ satisfied, at which point we start tracking the progress function. This tells
us that the value of $\theta$ will be at most $6\theta_0$ when we start tracking the progress
function.
\end{proof}

\begin{lemma}\label{lemm:neg_phi}
Suppose there is an algorithm than applies $\sop O_*$ when $\phi<0.$ Then there is always an
alternative algorithm that achieves the same or greater increase in progress for the same or less
cost (up to zeroth order), but applies $\sop O_*$ only when $\phi\geq0.$
\end{lemma}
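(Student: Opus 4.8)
The plan is an exchange argument. Given a Grover-like algorithm that applies $\sop O_*$ at some step where $\phi<0$, I would show how to reroute the state so that $\sop O_*$ is instead applied with $\phi\ge 0$ at no extra cost and with no loss of progress (to zeroth order), and then iterate. By the normalization already made in the proof of Theorem~\ref{thm:geo} (replace $\sop O_*$ by $G\sop O_*G$ whenever $|\phi|>\pi/2$) I may assume the offending application has $\phi^-\in[-\pi/2,0)$. Take the \emph{last} such application and let $t_0$ be the most recent earlier step at which $\phi\ge 0$; between $t_0$ and this application only $G$'s and $\sop O_S$'s occur, since an intervening $\sop O_*$ at $\phi<0$ would contradict ``last''. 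Write $d(\phi)=\min_{\ell\in\mathbb Z}|\phi+2\ell\pi-\pi/2|$, so that $G$ and $\sop O_S$ each change $H$ by exactly $-k\,\Delta d$ and an $\sop O_*$ applied at $\phi$ changes $H$ by $g(\phi):=2\theta_0\sin(\phi+\phi_0)-2k\phi$ to zeroth order (Eq.~(\ref{eq:cost_with_k}), with $k=\theta_0\cos(\phi_{opt}+\phi_0)$). I would then split into two cases according to whether the step right after $t_0$ that drives $\phi$ below $0$ is an $\sop O_S$ or an $\sop O_*$.

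\emph{Case 1: driven by $\sop O_S$.} The $\sop O_S$'s between $t_0$ and the offending step carry $\phi$ from the value $\phi_*\in[0,2\phi_0)$ it has at $t_0$ down to $\phi^-$, costing at least $\tfrac{|\phi^-|+\phi_*}{2\phi_0}\,c_S$ and changing $H$ by $-k\big(d(\phi^-)-d(\phi_*)\big)=-k(|\phi^-|+\phi_*)$. I replace the whole block $[\text{these }\sop O_S\text{'s};\ \sop O_*\text{ at }\phi^-]$ by: apply $\sop O_*$ at $\phi_*$, then apply $\tfrac{|\phi^-|+\phi_*}{2\phi_0}$ copies of the ``toward-$\pi/2$'' step $G\sop O_S$ to restore $\phi$ to exactly the value the continuation expects. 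The cost does not increase, the continuation is unchanged, and the new block's change in $H$ exceeds the old one's by $g(\phi_*)-g(\phi^-)+2k(|\phi^-|+\phi_*)=2\theta_0\big(\sin(\phi_*+\phi_0)-\sin(\phi^-+\phi_0)\big)\ge 0$. This case is entirely local.

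\emph{Case 2: driven by $\sop O_*$.} Then the offending $\sop O_*$ at $\phi^-$ is preceded (up to $G$'s) by an $\sop O_*$ applied at some $\phi_{prev}\in(0,\pi/2]$ with $\phi^-=-\phi_{prev}$. If an even number of $G$'s sit between the two, the pair is the identity and is simply deleted (saving $2c_*$); if odd, the $d$-contributions of the two $\sop O_*$'s cancel and, by the sum-to-product identity, they together change $H$ by $g(\phi_{prev})+g(-\phi_{prev})=4\theta_0\sin\phi_0\cos\phi_{prev}$ at cost $2c_*$. I replace this pair by two applications of $\sop O_*$ at $\phi=\phi_{opt}$ (or, when $\phi_{opt}=0$, by $\sop O_*G\sop O_*$ applied at $\phi=0$), both of which keep $\phi\ge 0$; the combined change in $H$ is $2g(\phi_{opt})=4\phi_0\theta_0\cos(\phi_{opt}+\phi_0)\tfrac{c_*}{c_S}$, and the inequality $\phi_0\cos(\phi_{opt}+\phi_0)\tfrac{c_*}{c_S}\ge\sin\phi_0\ge\sin\phi_0\cos\phi_{prev}$ shows no progress is lost at the same cost. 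That inequality follows from the defining equation (\ref{eq:kcond_f}) for $\phi_{opt}$ via monotonicity in $\phi_{opt}$ of $\sin(\phi_{opt}+\phi_0)-\phi_{opt}\cos(\phi_{opt}+\phi_0)$ (whose value at $\phi_{opt}=0$ is $\sin\phi_0$); in the borderline regime $\phi_{opt}=0$ one has $c_*=c_S\big(1+O(\phi_0^2)\big)$, so only $1-O(\phi_0^2)$ of slack is needed there, which is well within the zeroth-order tolerance.

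The main obstacle is the bookkeeping in Case 2: the substitute (two $\sop O_*$'s at $\phi_{opt}$) does not return $\phi$ to the value the continuation of the original algorithm expects, so, unlike Case 1, it is not a literal local surgery — one must re-route $\phi$ with extra $\sop O_S$'s, whose cost one wants to absorb into the $O(M^{-1/2})$ zeroth-order tolerance of the total cost $\Theta(c_S\sqrt N)$. Making this go through across possibly many offending applications (so that the accumulated re-routing cost and the accumulated $O(\theta_0/\cos\theta)$ slippage of $\phi$ produced by each $\sop O_*$ stay within tolerance) is the delicate point. I expect the cleanest resolution is not to build an explicit alternative algorithm at all, but to prove the cost bound globally — bounding the change in $H$ per unit cost over each maximal ``excursion of $\phi$ below $0$'' in one shot, using that such an excursion contains at most one $\sop O_*$ (at its very end), which makes the accounting manageable — and then to note that this global statement is exactly what the lemma is used for in the proof of Theorem~\ref{thm:geo}.
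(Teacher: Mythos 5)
There is a genuine gap, and it sits exactly where you flagged it. Your Case 1 is fine and is essentially the paper's treatment of the corresponding situation (the paper's Sequence III): divert at some $\phi_w\in[0,2\phi_0)$, apply $\sop O_*$ there, and use the same number of $\sop O_S$ rotations to rejoin the original trajectory; your sum-to-zeroth-order bookkeeping for that block is correct. The problem is Case 2. First, your case analysis there is incomplete: if the excursion below $\phi=0$ is entered by a $G\sop O_*$ sending $\phi_{prev}$ to $-\phi_{prev}$, the offending $\sop O_*$ need not be applied at $\phi^-=-\phi_{prev}$, because a string of $\sop O_S$ rotations can move $\phi$ around \emph{within} the negative region before the next $\sop O_*$ is applied. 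These are precisely the paper's Sequences I and II, and they require different arguments than your $G$-parity pair-cancellation (which only covers the degenerate sub-case with no intervening $\sop O_S$): for Sequence I the paper uses a concavity/averaging bound showing the two applications of $\sop O_*$ at $\phi_i$ and $\phi_{neg}$ together gain at most $2p_*\!\left(\phi_{opt}\right)$, and for Sequence II it compares against the strictly cheaper path that reaches $-\phi_{neg}$ by $\sop O_S$ rotations alone, showing the extra $2c_*$ buys at most $4\theta_0\sin\phi_0\le 2p_*(\phi_{opt})$ of progress. Your proposal never addresses these configurations.

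Second, even in the sub-case you do treat, you concede that the substitute (two applications of $\sop O_*$ at $\phi_{opt}$) does not return $\phi$ to the value the continuation expects, and you defer to an unexecuted ``global, per-excursion'' accounting. That deferral is the actual content of the lemma: the paper resolves it by not attempting a literal splice at all, instead bounding the \emph{increase in progress per unit cost} of each classified sequence against the benchmark of applying $\sop O_*$ at $\phi_{opt}$ (after first reducing, via the classification in Eq.~(\ref{eq:possible_sequences}) and the shortest-path observation, to sequences of the form $(G\sop O_S)^m$ or $(\sop O_S G)^m$ between consecutive $G\sop O_*$'s), which is the form in which the lemma is used in the proof of Theorem~\ref{thm:geo}. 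So your write-up correctly identifies the missing step but does not supply it; as it stands, the proof covers only the analogue of Sequence III and a measure-zero slice of the remaining cases.
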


\begin{proof}
%(possible sketch?)
We begin by classifying the the possible sequences of $\sop O_*$, 
$\sop O_S$,  and $G$ the algorithm can take. We will use notation such that unitaries
act from right to left, so $G\sop O_*$ signifies $\sop O_*$ acts first, and then
$G$ acts.

First look at $\sop O_*$. We can always assume $\sop O_*$ is followed
by a $G$; if it is not, insert a $GG$ pair after the $\sop O_*$. Note
in the discussion following Eq. (\ref{eq:ostar_phi_change}), we proved
that we can assume $|\phi|<\pi/2$ before applying $\sop O_*$. With Eqs.
(\ref{eq:G_angle_change}) and (\ref{eq:ostar_phi_change}) we have
\begin{align}
 G\sop O_*: \phi\rightarrow -\phi+O\left(\frac{\theta_0}{\cos\theta}\right).
\end{align}
Since $|\phi|< \frac{\pi}{2}$ before $G\sop O_*$ acts, we also have $|\phi|< \frac{\pi}{2}$ after $G\sop O_*$ acts, up to an additive
factor of $O\left(\frac{\theta_0}{\cos\theta}\right)$, which we can
ignore thanks to the discussion following Eq. (\ref{eq:drop_terms}). Therefore $G \sop O_*$ maps $\phi$ inside the $|\phi|< \frac{\pi}{2}$ region.

In between applications of $G \sop O_*,$ 
there is always a sequence of one of the following forms:
\begin{align} 
 (G\sop O_S)^m ,  \hspace{.5cm} 
 G(G\sop O_S)^m, \hspace{.5cm}
 (\sop O_SG)^m, \hspace{.5cm}   \text{ or } 
 \hspace{.5cm} G(\sop O_SG)^m,\label{eq:possible_sequences}
 \end{align}
 where $m$ is a non-negative integer
 that indicates multiple applications of the unitary sequence inside the parenthesis.
These are the only possible sequences because $\sop O_S\sop O_S=I$ and $GG=I$.
Combining  the action of $G$ and $\sop O_S$ in Eqs.  (\ref{eq:G_angle_change}) and  (\ref{eq:S_angle_change}) we get 
\begin{align}
&(\sop O_SG)^m:(\theta,\phi)\rightarrow (\theta, \phi-2m\phi_0) \\
&(G\sop O_S)^m:(\theta,\phi)\rightarrow (\theta, \phi+2m\phi_0).
\end{align}
Thus the 4 sequences of Eq. (\ref{eq:possible_sequences}) rotate $\phi$ by some amount $\pm 2m\phi_0$,
possibly followed by the transformation $\phi\rightarrow \pi-\phi$.

Now we focus on the algorithm's action on $\phi$. Since the $G \sop O_*$'s are mapping $\phi$
between points inside the $|\phi|< \frac{\pi}{2}$ region,
the four possible sequences of alternating $G$ and $\sop  O_S$ in 
Eq (\ref{eq:possible_sequences}) just connect the value of $\phi$ after applying $G \sop O_* $
to the value of $\phi$ before the next application of $G \sop O_*$. Generalizing Figure \ref{fig:paths},
 one can see that the
shortest path uses either $ (G\sop O_S)^m$  or $(\sop O_SG)^m$
to connect points inside the  $|\phi|< \frac{\pi}{2}$ region. 
Therefore we do not need to consider  the sequences
$ G(\sop O_SG)^m$ or   $G(G\sop O_S)^m$. 
%It is easy to see that the shortest
%path is always inside the $|\phi|< \frac{\pi}{2}$ region. That is, among the four sequences of Eq((\ref{eq:possible_sequences})), $ (G\sop O_S)^m$ is shorter than $ G(\sop O_SG)^m$, and $(\sop O_SG)^m$ is shorter than $G(G\sop O_S)^m$.

\begin{figure}%
\centering
\subfloat{
\includegraphics[width=6.2cm]{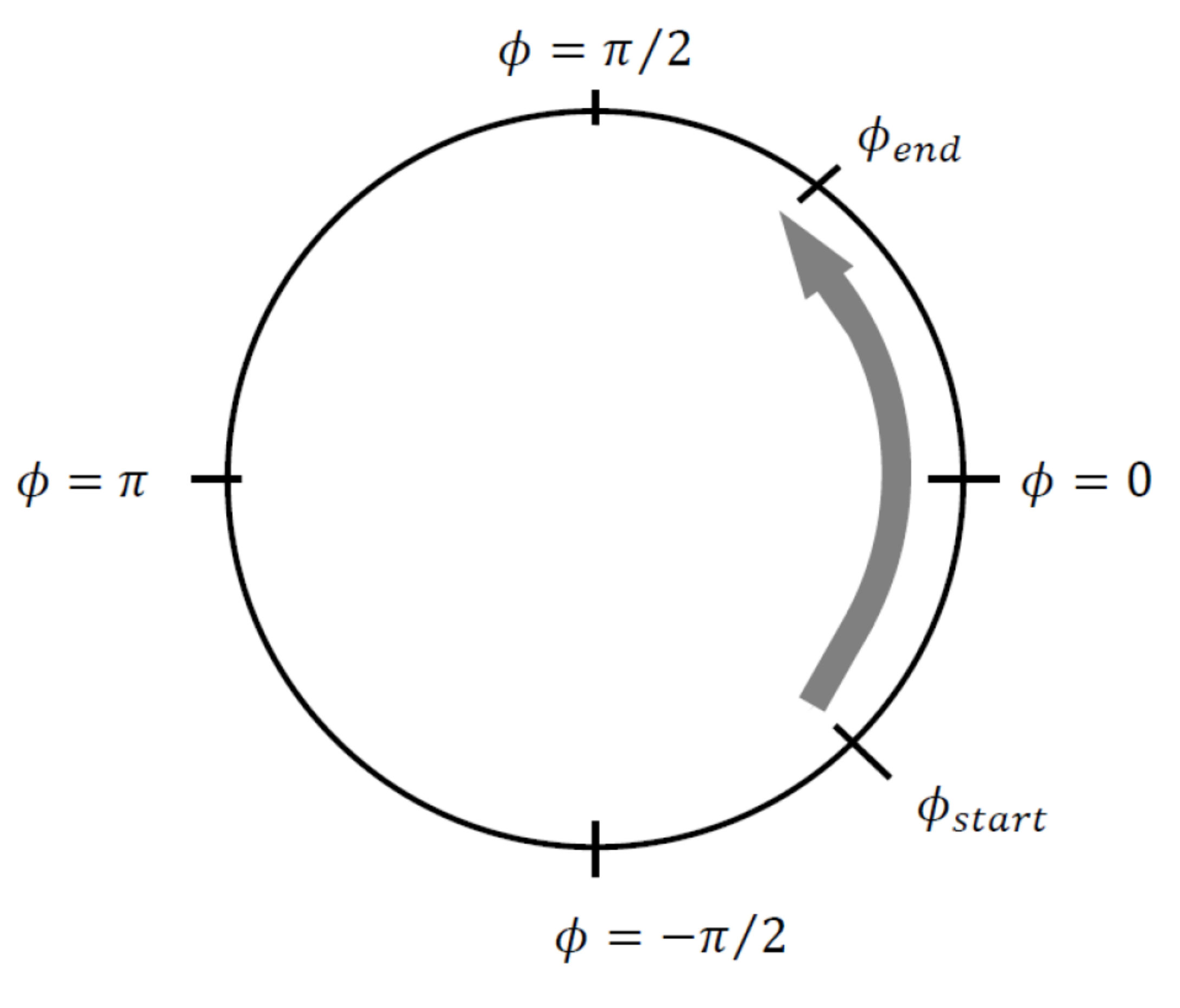}
%\caption{First.}%
\label{fig:shorpath}}%}%
\qquad
\subfloat{\includegraphics[width=6.7cm]{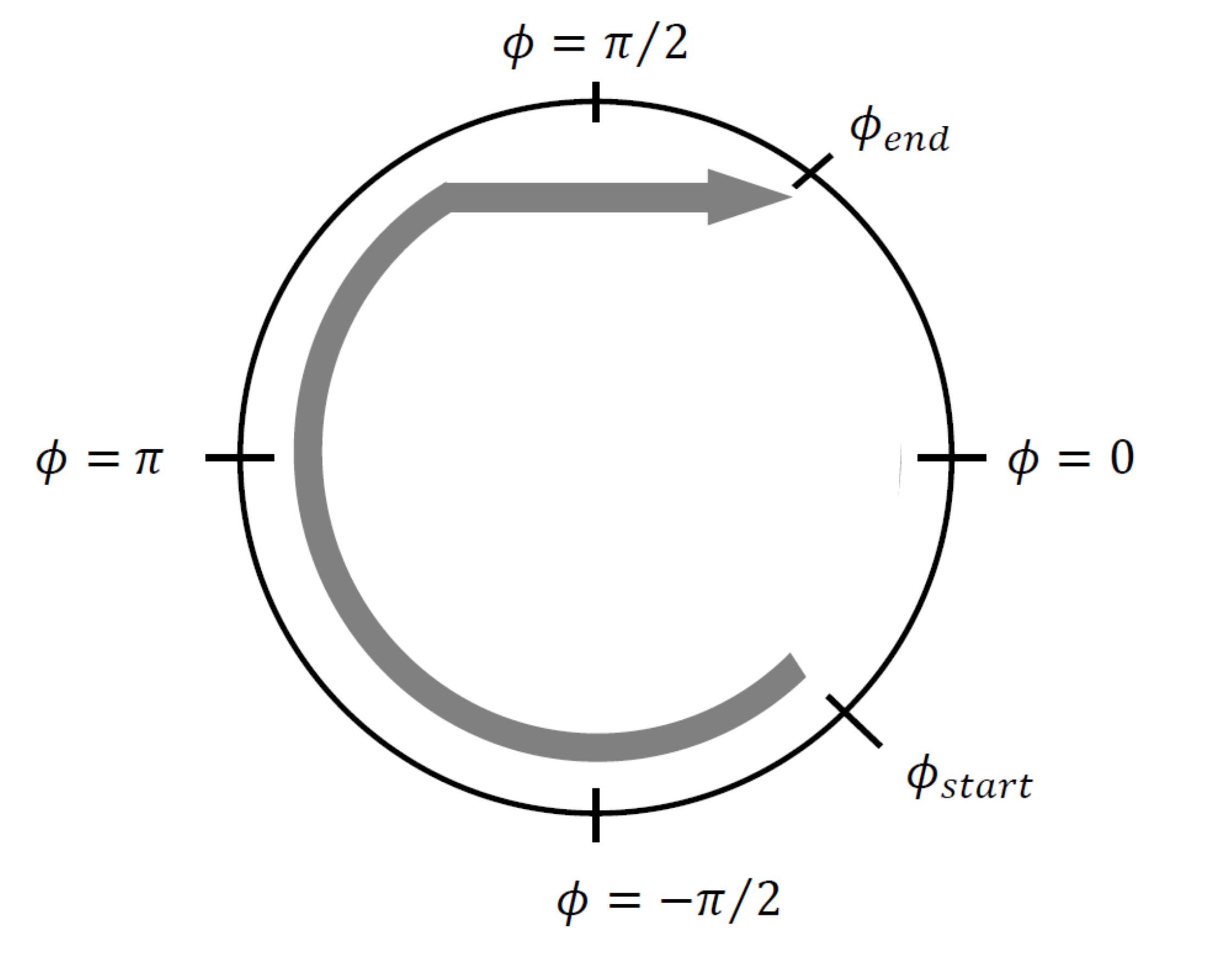}
%\caption{Second.}%
\label{fig:longpath}}%
\caption{The path in the figure at left uses a sequence
$ (G\sop O_S)^m$ to move from $\phi_{start}$ to $\phi_{end}$,
whereas the path in figure at right uses a 
sequence $G(\sop O_SG)^m$. The path using $ (G\sop O_S)^m$
is shorter, signifying that
fewer uses of $\sop O_S$ are required to move from $\phi_{start}$ to $\phi_{end}$,
and thus this is the more efficient path.
}%
\label{fig:paths}%
\end{figure}

%\begin{figure}%
%\centering
%\parbox{1.2in}{%
%\includegraphics[width=8cm]{shortpath2}
%%\caption{First.}%
%\label{fig:shorpath}}%
%\qquad
%\begin{minipage}{1.2in}%
%\includegraphics[width=8.8cm]{longpath2}
%%\caption{Second.}%
%\label{fig:longpath}%
%\end{minipage}%
%\caption{The path in Figure \ref{fig:shorpath} uses a sequence
%$ (G\sop O_S)^m$ to move from $\phi_{start}$ to $\phi_{end}$,
%whereas the path in Figure \ref{fig:longpath} uses a 
%sequence $G(\sop O_SG)^m$. The circular part of the path 
%is shorter in Figure \ref{fig:shorpath}, signifying that
%fewer uses of $\sop O_S$ are required to move from $\phi_{start}$ to $\phi_{end}$,
%and thus this is the more efficient path.
%}\label{fig:paths}
%\end{figure}

%\begin{figure}
%\begin{subfigure}[b]{.5\linewidth}
%\centering%
%\includegraphics[width=8cm]{shortpath2}
%\label{fig:shorpath}
%\end{subfigure}%
%\hfill%
%\begin{subfigure}[b]{.5\linewidth}
%\centering%
%\includegraphics[width=8.8cm]{longpath2}
%\label{fig:longpath}
%\end{subfigure}
%\caption{The path in Figure \ref{fig:shorpath} uses a sequence
%$ (G\sop O_S)^m$ to move from $\phi_{start}$ to $\phi_{end}$,
%whereas the path in Figure \ref{fig:longpath} uses a 
%sequence $G(\sop O_SG)^m$. The circular part of the path 
%is shorter in Figure \ref{fig:shorpath}, signifying that
%fewer uses of $\sop O_S$ are required to move from $\phi_{start}$ to $\phi_{end}$,
%and thus this is the more efficient path.
%}\label{fig:paths}
%\end{figure}

Next, we show that if one initially has $\phi>0$, it is never advantageous to 
again apply $G \sop O_*$ when $\phi<0.$
Since the algorithm must consist of applications of $ G \sop O_*$ separated by sequences 
of either $( \sop O_S G)^m$ or $(G \sop O_S )^m$, we can  enumerate and address the 
three possible cases that lead us to apply $\sop O_*$ at some
$\phi=\phi_{neg}<0$ after initially having $\phi\geq 0$. The three possible cases are laid out graphically in Figure \ref{fig:cases}.
In order to prove that none of the cases are optimal, we define the function 
\begin{align} \label{eq:progress_per_O_*}
 p_*(\phi) = 2(\theta_0 \sin(\phi+\phi_0) - k\phi)
 \end{align} as the change in progress 
function due to an application of
$\sop O_*$, dropping higher order terms. Note  for $\phi\geq0$, $\phi_{opt}$ optimizes Eq. (\ref{eq:progress_per_O_*}) as discussed after  Eq. (\ref{eq:cost_with_k}). We
proceed to treat the three cases.
%Since the algorithm starts with $\phi>0$, there are only three
%cases:  \todo{add pictures}

\begin{enumerate}[\bfseries Sequence I.]
\item \label{case:increasephi}
We consider the following sequence of operations (see Figure \ref{fig:cases}): 
\begin{enumerate}[(i)]
\item  Start with $\phi_i>0$. Then apply $ G\sop O_* $ to get to $-\phi_i.$
\item  Apply $(G \sop  O_S )$ some number of times to increase $\phi$ to $\phi_{neg} > -\phi_i.$ 
\item Apply $G\sop O_*$ to get to $-\phi_{neg} < \phi_i.$
\end{enumerate}
The change in progress due only to $\sop O_*$ in this sequence is
\begin{align}
p_*(\phi_i)+p_*(\phi_{neg}) &= 2(\theta_0 \sin (\phi_i+\phi_0)-k\phi_i)\no
&+2(\theta_0 \sin (\phi_{neg}+\phi_0)-k\phi_{neg})  \no
 %&= 2[2\theta_0 \sin (\frac{\phi_i+\phi_{neg}}{2}+\phi_0) \cos(\frac{\phi_i-\phi_{neg}}{2})-k(\phi_{neg}+\phi_i)]  \\
 &\le 4[\theta_0 \sin (\frac{\phi_i+\phi_{neg}}{2}+\phi_0)
 -k\frac{\phi_{neg}+\phi_i}{2}]  \no
&= 2 p_*(\phi_i+\phi_{neg}) \no
& \le 2 p_*(\phi_{opt}),
% &= \frac{2}{c_*}[\theta_0 \cos\phi_0 (\sin \phi_i+\sin \phi_{neg})-k(\phi_{neg}+\phi_i)]+O(\phi_0)  \\
%&\le  \frac{2}{c_*}[\theta_0 (\sin \phi_i+\sin \phi_{neg})-k(\phi_{neg}+\phi_i)] <0
\end{align}
Since $\phi_{neg} + \phi_i \ge 0$, the average progress due to the two applications of $\sop O_*$ 
is worse than if we had applied $\sop O_*$ at $\phi_{opt}$ both times. Thus this sequence
cannot be optimal.

\item \label{case:decreasephi}
We consider the following sequence of operations (see Figure \ref{fig:cases}): 
\begin{enumerate}[(i)]
\item  Start with $\phi_i > 0$. Then apply $G \sop O_* $ to get to $-\phi_i.$
\item  Apply $(\sop  O_S G )$ some number of times to decrease $\phi$ to $\phi_{neg}<-\phi_i.$
\item  Apply $G\sop O_*$ to get to $-\phi_{neg} > \phi_i$.
\end{enumerate}
Compare Sequence {\bf{II}} to the following Sequence ${\bf{2}}$:
\begin{enumerate}[(a)]
\item  Start with $\phi_i > 0$. Then apply $( G\sop  O_S  )$ 
some number of times to increase $\phi$ to $-\phi_{neg}>\phi_i.$
\end{enumerate}

%There are only two procedures:\han{the whole analysis assume $|\phi|< \frac{\pi}{2}$ }
% \todo{add pictures}
%\begin{enumerate}
%\item 
%\begin{enumerate}
%\item  start with some $\phi_i>0$, apply $\sop O_*$, get to $-\phi_i$
%\item  apply some$(O_S G)$, {increasing} $\phi$ to $\phi_{neg}>-\phi_i$ 
%\item apply $\sop O_*$
%\end{enumerate}
%\item
%\begin{enumerate}
%\item  start with $\phi_i \le 0$,  apply some$(O_S G)$, {decrease} $\phi$ to $\phi_{neg}<-\phi_i$
%\item apply $\sop O_*$, get to $-\phi_{neg}$
%\end{enumerate}
%\end{enumerate}

The difference in progress between Sequence {\bf{II}} and Sequence {\bf{2}} is
\begin{align}
%&[k(-\phi_{neg}-\phi_i)+(2\theta_0\sin(\phi_i+\phi_0)+2\theta_0\sin(\phi_{neg}+\phi_0))] - [k(-\phi_{neg}-\phi_i)]\\
&(2\theta_0\sin(\phi_i+\phi_0)+2\theta_0\sin(\phi_{neg}+\phi_0))\no 
=&4\theta_0 \sin (\frac{\phi_i+\phi_{neg}}{2}+\phi_0)\cos(\frac{\phi_i-\phi_{neg}}{2})\nonumber\\
 <& 4 \theta_0 \sin{\phi_0}, 
\end{align}
since $-\frac{\pi}{4}<\frac{\phi_i+\phi_{neg}}{2}<0$ and
$0<\frac{\phi_i-\phi_{neg}}{2}<\frac{\pi}{2}$. 
Sequence {\bf{II}} and Sequence {\bf{2}} both use the same number of applications of $\sop O_S$ (in steps (ii) and (a) 
respectively). Therefore, the Sequence {\bf{II}} has an additional cost $2c_*$ while
it only has an added increase in progress of 
\begin{align}
4 \theta_0 \sin{\phi_0} =& 2p_*(0)  
\nonumber\\
\le& 2p_*(\phi_{opt}).
\end{align}
 Therefore Sequence {\bf{II}} does not attain the increase in progress per 
cost that one could attain by only applying $\sop O_*$ at $\phi_{opt}$.

\begin{figure}[b!]
\begin{minipage}[b]{1\linewidth}
\centering%
\fbox{\includegraphics[width=8cm]{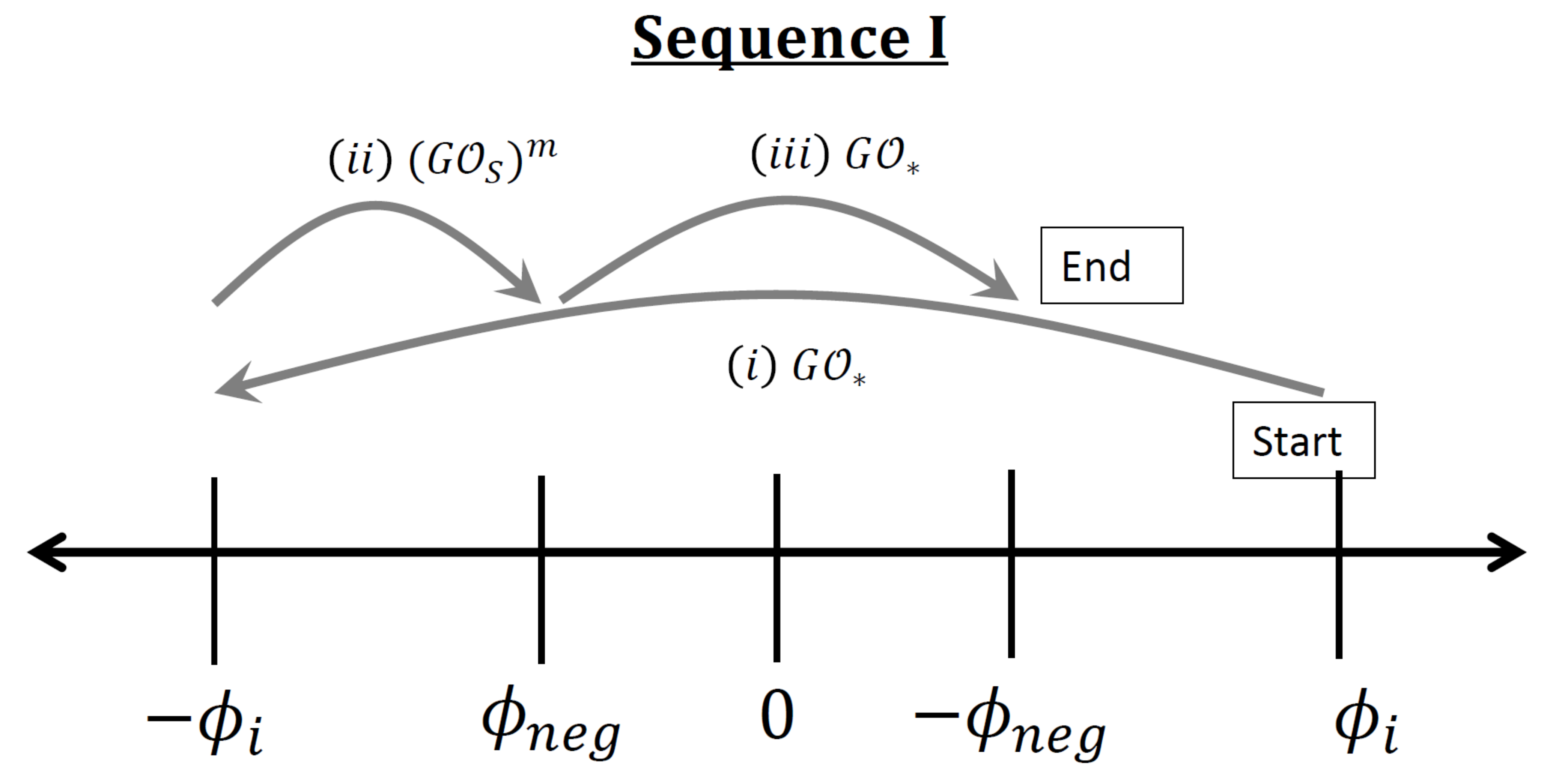}}
%\subcaption{}\label{fig:case1}
\end{minipage}%
\hfill%
\begin{minipage}[b]{1\linewidth}
\centering%
\fbox{\includegraphics[width=8cm]{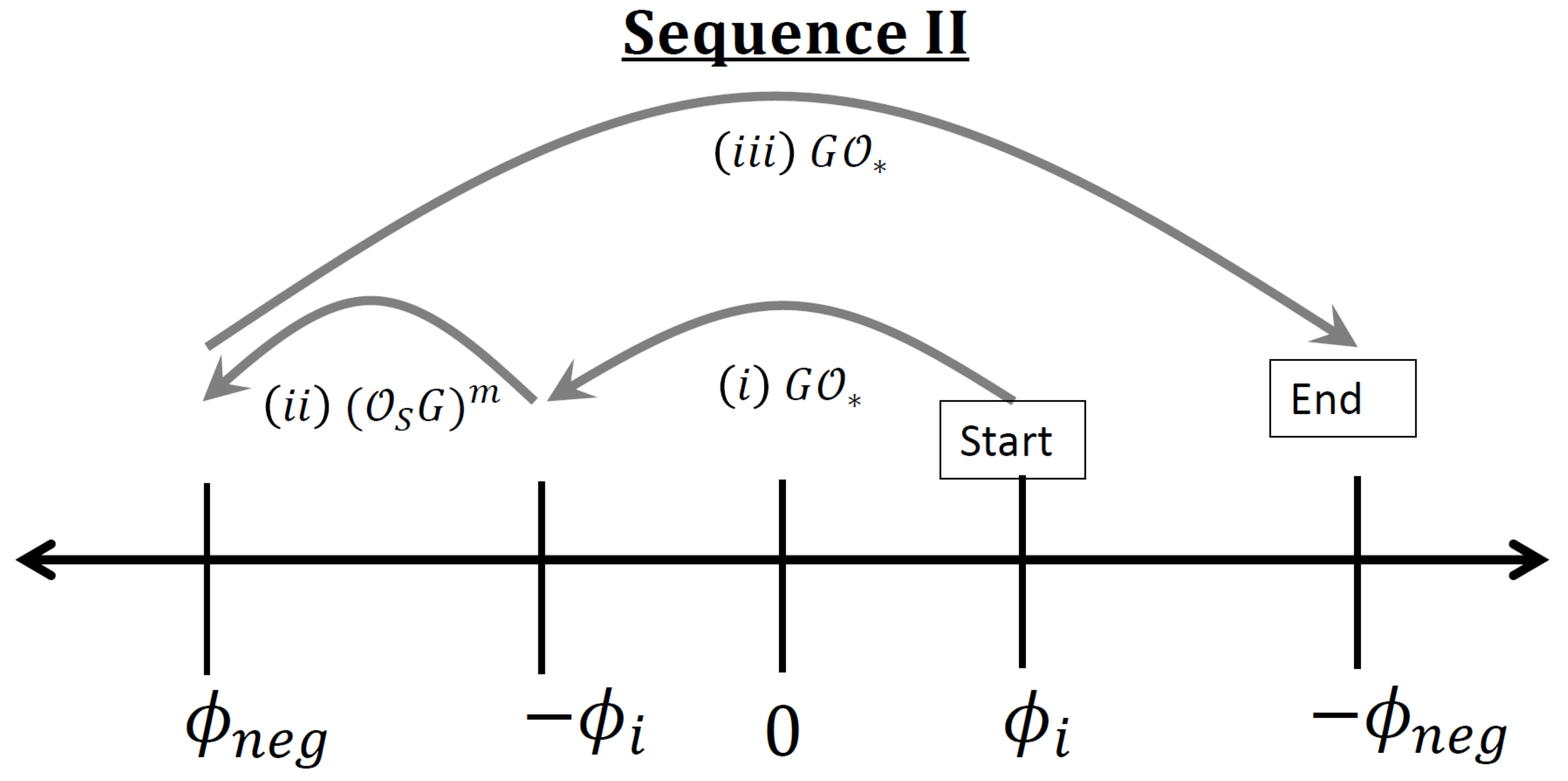}}
%\subcaption{}\label{fig:case2}
\end{minipage}%
\hfill%
\begin{minipage}[b]{1\linewidth}
\centering%
\fbox{\includegraphics[width=8cm]{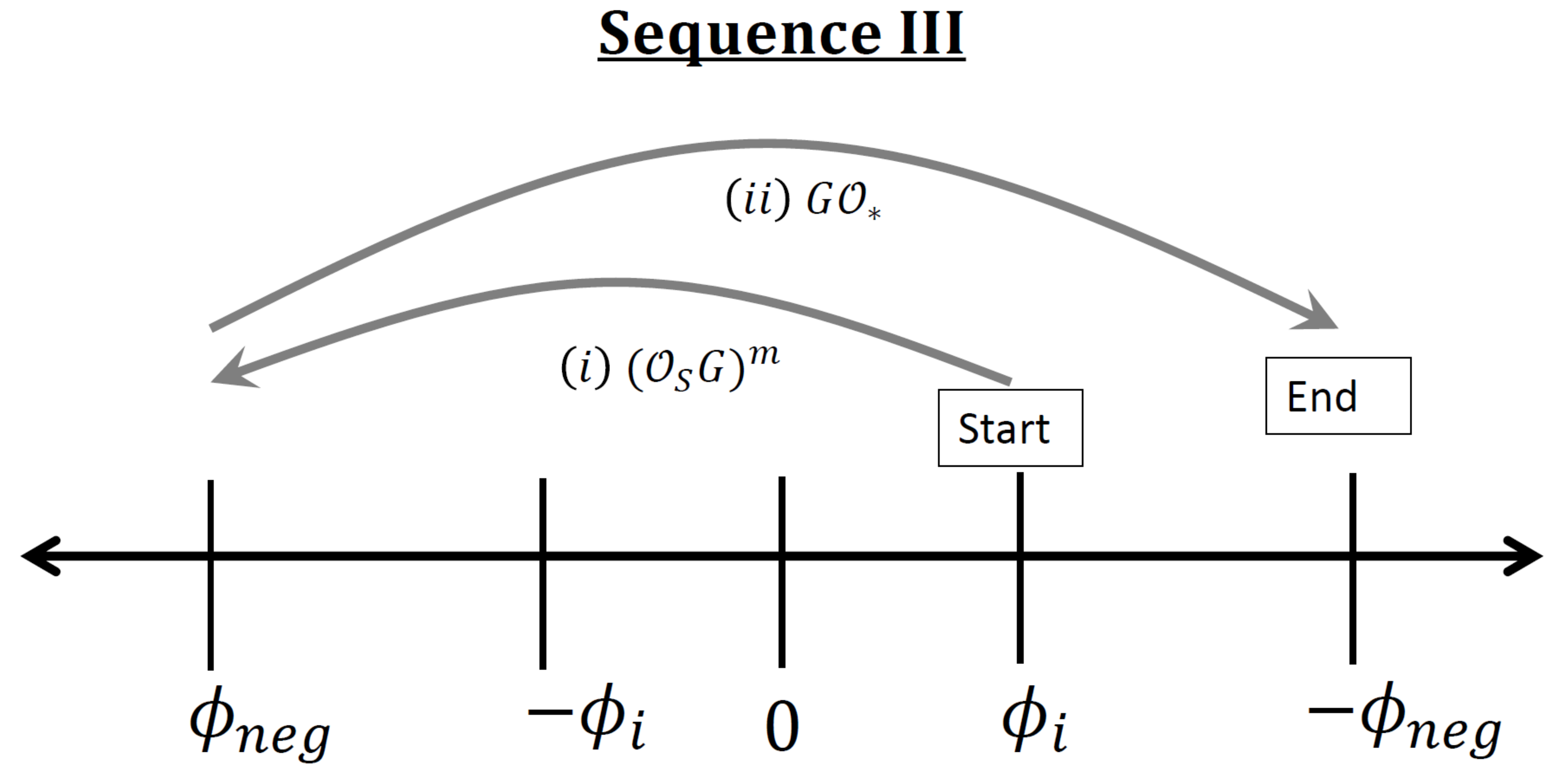}}
%\subcaption{}\label{fig:case3}
\end{minipage}
\caption{Possible paths that could lead to applying $G\sop O_*$ at a negative value 
of $\phi$, when initially, $\phi$ has positive value.}
\label{fig:cases}
\end{figure}

\item \label{case:straightdecreasephi}
We consider the following sequence of operations (see Figure \ref{fig:cases}): 
\begin{enumerate}[(i)]
\item  Start with $\phi_i \geq 0$,  then apply $(O_S  G)$ some number of times to decrease
 $\phi$ to $\phi_{neg}<0$.
\item Apply $G\sop O_*$ to get to $-\phi_{neg}$.
\end{enumerate}
Compare Sequence {\bf{III}} to the following Sequence {\bf{3}}:
\begin{enumerate}[(a)]
\item Start with $\phi_i \geq 0$, and then apply $(O_S  G)$ some number of times to decrease
 $\phi$ to $\phi_w$ such that $2\phi_0>\phi_w\geq0$.
\item  Apply $G \sop O_*$ to  get to $-\phi_w$.
\item  Apply $(GO_S)$ some number of times to increase $\phi$ to $-\phi_{neg}>0$.
\end{enumerate}
Note that we can always create a sequence with such a $\phi_w$ because $(O_S  G)$
changes $\phi$ by at most $2\phi_0$ each time.
The cost of Sequence {\bf{III}} is the same as the cost of Sequence {\bf{3}}. The difference in progress between 
Sequence {\bf{III}} and Sequence {\bf{3}} is
\begin{align}
&2\theta_0\sin(\phi_{neg}+\phi_0) - 2\theta_0\sin(\phi_w+\phi_0)\nonumber\\
\leq&4\theta_0 \cos\left(\frac{\phi_{neg}+\phi_w}{2}+\phi_0\right)\sin\left(\frac{\phi_{neg}-\phi_w}{2}\right)\no
 <& 0 
\end{align}
 since $|\frac{\phi_{neg}+\phi_w}{2}+\phi_0|<\frac{\pi}{2}$ and $\frac{\pi}{2}<\frac{\phi_{neg}-\phi_w}{2}<0$. Therefore  Sequence  \ref{case:straightdecreasephi} is not optimal either.
\end{enumerate}

Hence we conclude that applying $\sop O_*$ at negative $\phi$ never achieves
as much increase in progress per cost as applying $\sop O_*$ at $\phi_{opt}$,
and therefore we only need to consider applying $\sop O_*$ at positive $\phi$, at $\phi_{opt}$.
\end{proof}

%%%%%%%%%%%%%%%%%%%%%%%%%%%%%%%%%%%%%%%%%%%%%%%%%%%%%%%%%%%%%%%%%%%%%%%%%%%%%%%%%%%%%%%%%%%%%%%
%%%%%%%%%%%%%%%%%%%%%%%%%%%%%%%%%%%%%%%%%%%%%%%%%%%%%%%%%%%%%%%%%%%%%%%%%%%%%%%%%%%%%%%%%%%%%%%
%%%%%%%%%%%%%%%%%%%%%%%%%%%%%%%%%%%%%%%%%%%%%%%%%%%%%%%%%%%%%%%%%%%%%%%%%%%%%%%%%%%%%%%%%%%%%%%

\section{An Adversary Lower Bound}\label{sec:adversary}

In this section, we will show how to apply the adversary method to
the problem of cost complexity of STO. 

%While the bound of Section \ref{sec:lowerbound} is asymptotically
%tight, the technique is somewhat limited, so in this section, we reprove
%the same result using the more general techniques of the adversary bound.

%For STO, one is given two oracles, $\sop O^*$ and $\sop O^\sop S.$ We
%label a particular instantiation of this oracle pair as $\sop O^*_{x,s}$ and
%$\sop O^\sop S_{x,s}$ where $(x,s)$ is a pair of $N$-bit binary strings, with $x=\{x_0,\dots,x_{N-1}\}$, where
%$x_i=1\leftrightarrow i=i^*$ and $0$ otherwise, and $s=\{s_0,\dots,
%s_{N-1}\}$, where $s_i=1\leftrightarrow s_i\in S$ and $0$ otherwise. Let $\sop E$
%be the set of valid pairs $(x,s)$. Then a general
%algorithm with $t$ steps using the oracle pair $\sop O^*_{x,s}$ and
%$\sop O^\sop S_{x,s}$ takes the form
%\begin{align}
%\sop U^t\sop O^{j_t}_{x,s}\cdots \sop U^2
%\sop O^{j_2}_{x,s} \sop U^1\sop O^{j_1}_{x,s},
%\end{align}
%where $j_t\in\{*,\sop S\}$. Let $\ket{\psi_{x,s}^t}$ be the state of the system after $t$ steps
%of the algorithms, so 
%\begin{align}
%\ket{\psi_{x,s}^t}=\sop U^t\sop O^{j_t}_{x,s}\cdots \sop U^2
%\sop O^{j_2}_{x,s} \sop U^1\sop O^{j_1}_{x,s}\ket{\psi_{x,s}^0},
%\end{align}

Suppose we are given access to an oracle $\sop O_*$, which implements the function
$f_*$, and an oracle $\sop O_S,$ which implements the function $f_S$. Then any algorithm
which solves STO
using these oracles, after $t$ steps, produces a state
\begin{align}
\ket{\psi_{f_*,f_S}^t}=U^t\sop O_{c_t}\cdots U^2
\sop O_{c_2} U^1\sop O_{c_1}\ket{\psi^0},
\end{align}
where $c_j\in\{*, S\},$ and $U^j$ are fixed unitaries independent of $f_*$ and $f_S.$

%Let $\ket{\psi_{f_*,f_S}^t}$ be the state of the system after $t$ steps
%of the algorithms, where each step involves applying the oracle $O_*$, 
%which implements the function $f_*$, or $O_S$, which implements the function
%$f_S$. The state of the system is initially the state $\ket{\psi^0}$, and
%in between applications of the oracles, we can apply any unitary that is

We create an adversary matrix $\Gamma$, a matrix
 whose rows and columns are indexed by pairs of functions $(f_*, f_S)\in D_{\textrm{STO}}$, where
 $D_{STO}$ is the set of valid inputs to STO. Furthermore, we have the condition that
 that $\Gamma[(f_*,f_S),(g_*,g_S)]=0$ if $\textrm{STO}(f_*,f_S)=\text{STO}(g_*,g_S).$
With this notation, we define the progress function:
\begin{align}
W^t=\sum_{(f_*,f_S),(g_*,g_S)\in D_{\textrm{STO}}\times D_{\textrm{STO}}}
\Gamma_{(f_*,f_S),(g_*,g_S)}v_{f_*,f_S}v_{g_*,g_S}^*\braket{\psi_{f_*,f_S}^t}{\psi_{g_*,g_S}^t}
\end{align}
for a vector $v$ indexed by the elements of $D_{\textrm{STO}}$, such that $\|v\|=1$ and
$v$ is an eigenvector of $\Gamma$ with eigenvalue $\pm\|\Gamma\|$, (where $\|\cdot\|$ signifies the 
$l$-2 norm for vectors or the induced $l$-2 norm for matrices).

Then following \cite{HLS07}\footnote{The proofs are identical, so we omit them.}, we have 
\begin{enumerate}
\item $W^0=\|\Gamma\|$.
\item $W^T\leq \left(2\sqrt{\epsilon(1-\epsilon)}+2\epsilon\right)\|\Gamma\|$, for any algorithm
with probability of error at most $\epsilon$.
\item $W^{t-1}-W^t\leq 2\max_i\|\Gamma\circ D_i^{c_t}\|$ where $D_i^{c_t}$ are $|D_{\textrm{STO}}|\times |D_{\textrm{STO}}|$
matrices satisfying
\begin{align}
&
D_i^*[(f_*,f_S),(g_*,g_S)]=
\begin{cases}
0 \text{ if } f_*(i)=g_*(i),\no
1 \text{ otherwise},
\end{cases}
&
D_i^S[(f_*,f_S),(g_*,g_S)]=
\begin{cases}
0 \text{ if } f_S(i)=f_S(i),\no
1 \text{ otherwise}.
\end{cases}
\end{align}
\end{enumerate}

Thus if $q_*$ queries are made to $\sop O_*$ and $q_S$ queries are made to $\sop O_S$,
we have
\begin{align}\label{eq:advbound}
\|\Gamma\|g(\epsilon)
\leq q_*\max_i\|\Gamma\circ D^*_i\|+q_S\max_i\|\Gamma\circ D^S_i\|
\end{align}
where
\begin{align}
g(\epsilon)=\frac{1-\left(2\sqrt{\epsilon(1-\epsilon)}+2\epsilon\right)}{2}.
\end{align}

We construct the following adversary matrix for STO:
$\Gamma[(f_*,f_S),(g_*,g_S)]=1$ if one of the following conditions holds:
\begin{itemize}
\item STO$(f_*,f_S)=1$, STO$(g_*,g_S)=0$, and $f_S(i)=g_S(i)$ except if $f_*(i^*)=1$, then $g_S(i^*)=0,$
\item STO$(g_*,g_S)=1$, STO$(f_*,f_S)=0$, and $g_S(i)=f_S(i)$ except if $g_*(i^*)=1$, then $f_S(i^*)=0.$
\end{itemize}
Otherwise, $\Gamma=0$. 

One can calculate (or it is easy to see by analogy to a standard Grover search over $N-M+1$
items)
that 
\begin{align}
\|\Gamma\|&=\sqrt{N-M+1},\no
\max_i\|\Gamma\circ D_i^{c_t}\|&=1,\no
\max_i\|\Gamma\circ D^S_i\|&=1.
\end{align}
Plugging into Eq. (\ref{eq:advbound}) we have 
\begin{align}
g(\epsilon)\sqrt{N-M+1}
\leq q_*+q_S,
\end{align}
so for $N>M/2$, we have
\begin{align}\label{eq:adv_bound1}
QCC(\textrm{STO})=\Omega(c_S\sqrt{N}).
\end{align}

We also consider a second adversary matrix for STO. Let 
$\Gamma[(f_*,f_S),(g_*,g_S)]=1$ if one of the following conditions holds:
\begin{itemize}
\item STO$(f_*,f_S)=1$, STO$(g_*,g_S)=0$, and $f_S(i)=g_S(i)$,
\item STO$(g_*,g_S)=1$, STO$(f_*,f_S)=0$, and $g_S(i)=f_S(i)$.
\end{itemize}
Otherwise, $\Gamma=0$.

In this case, the adversary matrix
only pairs instances such that  $\sop O_S$ is the same in both pairs. Thus it is as if the set $S$
is known ahead of time.
In this case, one can calculate (or it is easy to see by analogy to a standard Grover search over $M$
items), that
\begin{align}
\|\Gamma\|&=\sqrt{M}\no
\max_i\|\Gamma\circ D_i^{j_t}\|&=1\no
\max_i\|\Gamma\circ D^S_i\|&=0.
\end{align}
Plugging into Eq. (\ref{eq:advbound}), we have
\begin{align}
g(\epsilon)\sqrt{M}\leq q_*,
\end{align}
so 
\begin{align}\label{eq:adv_bound2}
QCC(\textrm{STO})=\Omega(c_*\sqrt{M})
\end{align}
Combining Eq. (\ref{eq:adv_bound1}) and Eq. (\ref{eq:adv_bound2}), we obtain a bound that matches Eq. (\ref{eq:cost12}):
\begin{align}
QCC(\textrm{STO})=\Omega\left(\max\{c_*\sqrt{M},c_S\sqrt{N}\}\right).
\end{align}

\end{document}